\documentclass[Royal,sageh,times]{sagej}

\usepackage{moreverb,url}
\usepackage{subfigure}
\usepackage{afterpage}
\newtheorem{lemma}{Lemma}
\newtheorem{remark}{Remark}
\newtheorem{theorem}{Theorem}
\newtheorem{assumption}{Assumption}

\usepackage[colorlinks,bookmarksopen,bookmarksnumbered,citecolor=red,urlcolor=red]{hyperref}
\usepackage{tikz,xcolor,hyperref}
\definecolor{lime}{HTML}{A6CE39}
\DeclareRobustCommand{\orcidicon}{%
	\begin{tikzpicture}
		\draw[lime, fill=lime] (0,0) 
		circle [radius=0.16] 
		node[white] {{\fontfamily{qag}\selectfont \tiny ID}}; 
		\draw[white, fill=white] (-0.0625,0.095) 
		circle [radius=0.007];	  
	\end{tikzpicture}
	\hspace{-2mm}}
\foreach \x in {A, ..., Z}{%
	\expandafter\xdef\csname orcid\x\endcsname{\noexpand\href{https://orcid.org/\csname orcidauthor\x\endcsname}{\noexpand\orcidicon}}
}
\makeatother

\newcommand\BibTeX{{\rmfamily B\kern-.05em \textsc{i\kern-.025em b}\kern-.08em
T\kern-.1667em\lower.7ex\hbox{E}\kern-.125emX}}

\begin{document}


\title{Distributed Secondary Frequency Control and SoC Balance for Droop-Controlled BESSs with A Unified SoC Relative Variation Rate}
\author{Yalin Zhang\affilnum{1,2}\orcidA{},
	Zhongxin Liu\affilnum{1,2}\orcidB{},
	Fuyong Wang\affilnum{1,2}\orcidC{},
	and Zengqiang Chen\affilnum{1,2}\orcidD{}}

\affiliation{\affilnum{1}College of Artificial Intelligence, Nankai University, Tianjin 300350\\
\affilnum{2}Tianjin Key Laboratory of Interventional Brain-Computer Interface and Intelligent Rehabilitation, Nankai University, Tianjin 300350, China}

\corrauth{Fuyong Wang, College of Artificial Intelligence, and Tianjin Key Laboratory of Interventional Brain-Computer Interface and Intelligent Rehabilitation, Nankai University, Tianjin 300350, China}

\email{wangfy@nankai.edu.cn}

\begin{abstract}
	The State of Charge (SoC) balance, power sharing, and frequency restoration are common control objectives of battery energy storage systems (BESSs). However, the SoC balance scheme induced by the power allocation through existing droop controllers can cause the capacity parameters of battery cells to be unequal to the droop coefficient, which is the result of battery capacity degradation. Under this limitation, previous capacity based droop controllers and the secondary controllers no longer suitable to address the imprecise power sharing and frequency restoration caused by this problem. Therefore, a power allocation scheme based on the current SoC level ratio is designed to induce a new droop controller and ensure that the SoC simultaneously drops to 0. In order to restore frequency in a distributed manner and obtain SoC level ratio, a distributed nominal frequency controller, SoC average estimator, and power sharing controller are designed based on multi-agent systems (MASs) in both asymptotic and finite time manners. In the asymptotic scheme, the steady-state performance of the SoC estimator is adjustable, and power sharing and frequency restoration are zero errors. In the finite time scheme, the influence of parameters on convergence time is well analyzed, and some conservative calculation methods are provided. Several time-domain simulation examples are
	designed on an improved IEEE57 bus system to verify the distribution of asymptotic and finite time schemes.
\end{abstract}

\keywords{Battery energy storage system, multi-agent system, distributed control, droop control, SoC balance.}

\maketitle
\begin{center}
	N\footnotesize{OMENCLATURE}
\end{center}
\normalsize
\begin{tabbing}
	\hspace{2cm} \= \kill
	BESS\> battery energy storage system\\
	SoC\>  the State-of-Charge\\
	MAS\> multi-agent system\\
	power\> proportional output power\\
	$E_i$, $K_i^E$, $K_i^P$\> the SoC, capacity, and droop coefficient\\
	$\omega_i$, $\omega_i^*$\> the frequency and nominal frequency\\
	$u_i^P$, $u_i^\omega$\>the control inputs\\
	$K_{i,m}^E$, $\eta_i$, $f_i$\>the rated capacity, Coulomb efficiency, and\\ \>current cycle number\\
	$P_a$, $E_a$\> the average power and SoC\\
	${\hat P}_{a,i}$, ${\hat E}_{a,i}$\> the estimated average power and SoC\\
	$\tilde P_i$, $P_i$\> output power and its proportional value\\
	$\mathrm{K}_i^E$\> a constant defined as $K_i^E=\frac{1}{Q_iV_i}$\\
	$\hat P_{a,i}$, $\hat E_{a,i}$\> the estimated average power and SoC \\
	$\epsilon_E$\> the control accuracy\\
	$\cal G$, $\cal V$\> the communication graph and vertex set\\
	$\cal E$, $\cal L$\> the edge set and Laplacian matrix\\
	$a_{ij}$, $d_i$\> the communication weight from $j$ to $i$,\\ 
	\>the in-degree of $i$\\
	$\cal A$, ${\cal D}$\> the adjacency matrix and in-degree matrix\\
	$\cal B$, $b_{ii}$\> the adjacency matrix between the leader and\\
	\>all follower and its component\\
	$q_i$, $v_i$\> the intermediate states\\
	$\alpha$, $\beta$, $\beta_1$, $\beta_2$, $\kappa$\> the control gains\\
	$T_P$, $T_E$, $T_\omega$\> the controller stability time\\
	$\omega^r$\> the frequency of the main grid\\
\end{tabbing}
\section{Introduction}
Battery energy storage system (BESS) is integrated into the smart grid to suppress the peak and valley supply gap and suppress the randomness and intermittency of renewable energy generation \cite{Lin2022, Pei2022}, thereby improving power quality and providing reliable uninterrupted power supply \cite{Pei2022}. However, the management of multiple parallel BESSs also faces challenges \cite{rouholaminiReviewModelingManagement2022,caleroReviewModelingApplications2022,Kang2022}. The imbalance of State of Charge (SoC), serving as an important indicator to measure the level of a BESS, leads to the waste of battery capacity  \cite{caleroReviewModelingApplications2022,Kang2022,Xu2022,Kebede2022,Qays2022}, which increases management costs. The differential restoration of frequency leads to power harmonics, thereby reducing power quality \cite{Pei2022,rouholaminiReviewModelingManagement2022,Kang2022,Xu2022,hossainlipuReviewControllersOptimizations2022}. Hence, cooperative management/control of multiple parallel BESSs becomes a problem worthy of in-depth study.
\par Thus, a large number of excellent literature has been committed to solving the above problems. Both distributed and centralized manners have been investigated. For example, SoC of each battery was monitored by a centralized controller in \cite{Gonzalez-Garrido2020}, while external balancing circuit is applied to realize power sharing. However, the external balancing circuit was going to consume some energy. In practice, centralized controller is more expensive, and the single-point failures are unavoidable \cite{rouholaminiReviewModelingManagement2022,caleroReviewModelingApplications2022,forero-quinteroProfitabilityAnalysisDemandside2022}. Instead, distributed control represented by multi-agent systems (MASs) can enhance the robustness and scalability of the system, making it more favored for managing BESSs.
\par For multiple BESSs, many distributed algorithms based on MAS for BESS focus on asymptotic control \cite{8076899, Nguyen2021a,Zeng2022,Raeispour2020a}, finite time control \cite{Ding2020, Zhang2020}, and disturbance rejection/resilience control \cite{Ding2020,Raeispour2020a}, etc, of SoC balancing, power sharing and frequency restoration. Therefore, in terms of convergence performance \cite{8076899, Nguyen2021a,Zeng2022,Raeispour2020a,Ding2020, Zhang2020}, robustness performance \cite{Ding2020,Raeispour2020a}, etc., these results perform extremely well. The model-free method is also used to promote SoC consensus and frequency regulation based on droop control in \cite{Chen2022a}. The author improved the power allocation scheme based on capacity using an adaptive term based on SoC in \cite{Yang2021d} and achieved SoC consensus. In \cite{Yu2021}, the author allocates power based on marginal cost to achieve optimal economic cost operation. These distributed algorithms for BESS are hierarchical and have some commonalities. Therein, power needs to be proportionally shared according to the capacity of each BESS, resulting in SoC consensus and only power participation in frequency modulation based on droop control. 
\par However, the consensus method of SoC based on power exchange has changed the working modes of some BESSs, that is, from discharge mode to charging mode. This not only leads to an increase in losses on the power feeder line, but also causes some existing battery parameters to change due to early entry into the next cycle, thus requiring a re-estimation of the parameters. In other words, the current methods have caused waste of power and capacity, as well as modeling errors.
\par Recently, some new developments regarding SoC balance have emerged in \cite{Xing2019,Meng2021,Meng2022, Wu2022}. The results of these studies are all aimed at verifying that when BESSs have a uniform discharge rate, no battery will exit prematurely due to power depletion, in other words, all batteries will run out of energy at the same time. There is a power allocation scheme implied here. Unlike previous balance schemes based on SoC consensus, these results seem to indicate that as long as power is shared according to the current SoC, the goal of achieving time consensus for SoC depletion to 0 can be achieved. Based on this, the previous droop control as a primary control strategy cannot achieve accurate frequency restoration. Thus, inspired by the above related work, considering the novel power allocation scheme, we replace the original static value with a dynamic droop coefficient based on SoC. This dynamic value includes the rated capacity of the battery, real-time SoC, and its average value. In order to obtain this average value and achieve secondary frequency control in distributed manners, we are going to design two types of distributed estimators for the SoC balance and two active power sharing controllers, which are later referred to as power sharing estimators, for multi-BESSs network with the heterogeneous dynamics in this paper, respectively. That is, asymptotic and finite time schemes are simultaneously designed here. Concisely, the main contributions in this paper are listed as follows.
\begin{enumerate}	
	\item[(i)] This article introduces a discharge rate constraint for each battery, expecting each BESS to have the same relative discharge rate. The constraint of the discharge mode introduced in this article ensures that all batteries are depleted simultaneously and that the operating mode of each one will be unchanged before depletion. Compared to existing schemes in \cite{9889208, 8076899, Nguyen2021a,Zeng2022,Raeispour2020a,Ding2020, Zhang2020}, our designed scheme does not require power exchange, thereby reducing losses on power feeders.
	\item[(ii)] A new frequency droop controller is proposed. Active power is no longer allocated based solely on the rated capacity of each battery, but rather on the rated capacity and the current SoC level of each BESS and the average SoC and output power states in this network. That is to say, the frequency droop coefficient is dynamic and non constant, which is different from before \cite{8076899, Nguyen2021a,Zeng2022,Raeispour2020a,Ding2020, Zhang2020}. The advantage of doing so is that the power allocated by this droop control can ensure that all BESSs are in discharge mode at all times and depleted simultaneously.
	\item[(iii)] In order to obtain the average SoC and output power states and achieve secondary frequency control and SoC balance, two distributed schemes are proposed based on an improved droop control strategy, one is asymptotic and the other is finite time.
\end{enumerate}
\par Section \textcolor{yellow}{2} gives a general description of graph theory. Section \textcolor{yellow}{3} introduces a common model of BESS with droop control and the common control objectives. \textcolor{yellow}{Sections 4} and \textcolor{yellow}{5} present asymptotic and finite-time schemes, respectively. Section \textcolor{yellow}{6} designs several cases to test the proposed estimators. Section \textcolor{yellow}{7} summarizes this paper.
\section{The Description of \textcolor{blue}{A} BESS and Problem Statement} \label{II}
\subsection{\textcolor{blue}{A} Frequency droop \textcolor{blue}{controlled} BESS}
\par \textcolor{blue}{A} BESS usually adopts \textcolor{blue}{a} hierarchical control scheme. In this paper, we adopt a multi-BESS network, and the simplified dynamics of each one can be described in \textcolor{blue}{the} \textit{dq} frame as (\ref{1a})-(\ref{1d}), referred to \cite{8076899, Nguyen2021a, Raeispour2020a, Ding2020, Zhang2020} for detail.
\begin{subequations}
	\begin{equation}
		\label{1a}
		\dot E_i=\frac{- K_i^E}{3600}{\tilde P}_i + u_i^E,
	\end{equation}
	\begin{equation}
		\label{1b}
		\dot {\tilde P}_i = u_i^P,
	\end{equation}
	\begin{equation}
		\label{1c}
		\omega_i=\omega_i^* - K_i^P{\tilde P}_i,
	\end{equation}
	\begin{equation}
		\label{1d}
		\dot \omega _i^* = u_i^\omega,
	\end{equation}
\end{subequations} 
where ${E_i}$ is \textcolor{blue}{the} SoC of battery $i$, ${P_i}$ and ${\omega _i}$ are the output active power and \textcolor{blue}{the} frequency, respectively, of inverter $i$, $\omega _i^*$ is represented for the nominal value of the frequency, \eqref{1c} is droop control, namely the primary control for frequency, $K_i^P$ is \textcolor{blue}{a} droop coefficient. Therefore, BESS models conforming to these dynamics are heterogeneous. Previous researchers have favored using $K_i^P =  {K_i^E}$ and denoting ${P_i} \buildrel \Delta \over = K_i^P{\tilde P_i}$, which means that active power is allocated according to the capacity of battery $i$.
\par In previous SoC balance schemes, both underactuated and fully actuated schemes are actually induced by output power to achieve SoC consensus. Driven by these
schemes, some batteries at low SoC levels are in charging mode for a period of time, while others are in discharge mode. The inconsistency of working modes has led to batteries in charging mode entering the next cycle ahead of schedule. However, we note that the capacity of the battery decreases with an increase in the number of cycles. Specifically, for BESS $i$ in the $f_i$-th cycle, its
current actual capacity is
$$K_i^E=K_{i,m}^E\eta_i^{f_i},$$
where $K_{i,m}^E$ and $\eta_i$ are the rated capacity and the Coulomb efficiency of BESS $i$, respectively. Therefore, treating $K_i^E$ as a constant may cause significant modeling errors.
\subsection{A novel power allocation strategy and its induced droop controller}
\par The significance of SoC balance is to ensure that the battery cells within the battery pack are simultaneously depleted, thereby fully utilizing the capacity of the battery pack. The previous schemes of directly aligning SoC and proportionally aligning output power both require power exchange between battery packs, resulting in unnecessary waste on power feeders. To alleviate this situation, for a BESS network containing $n$ batteries, the output power of each battery often needs to be determined based on its SoC. Mathematically, batteries have the same relative discharge rate i.e.,
\begin{equation}
	\label{2}
	\frac{P_i}{E_i}=\frac{P_j}{E_j},\forall i,j \in \left\{ 1,2, \cdots ,n \right\}\textcolor{blue}{.}
\end{equation}
\textcolor{blue}{Combining} (\ref{1a}) and (\ref{2}), it can be concluded that
the output power of each BESS can be counted as
\begin{equation}
	\label{3}
	P_i=\frac{E_i}{E_a}{P_a},
\end{equation}
where $E_a$ and $P_a$ are the average value of \textcolor{blue}{SoCs and} the total load power\textcolor{blue}{, respectively}. As long as the power is allocated according to this scheme, SoC balance can be guaranteed without the need for additional control inputs.
\par Based on (\ref{2}), we can conclude that active power needs to be allocated according to SoC level. If $K_i^P=K_i^E\frac{E_a}{E_i}$, then $P_a=K_i^P\tilde{P}_i$. In this way, as a primary controller, the droop control can be modified as follows,
\begin{equation}
	\label{4}
	\omega_i=\omega_i^*-P_a.
\end{equation}
\begin{remark}
	For SoC balance schemes that rely on a common frequency droop control shown in \eqref{1c} to allocate power, $K_i^P=K_i^E$ is constant and a widely used condition, which is very unreasonable from the above analysis. And our redesigned droop control \eqref{4} adopts a dynamic droop control gain, to allocating power \textcolor{blue}{according to the current SoC level} to ensure that all BESSs are in discharge mode and depleted simultaneously.
\end{remark}
\subsection{Control objectives}
\par Frequency restoration helps to provide high-quality electricity. Before that, SoC balance and power sharing states should be well distributed estimated. Hence, each BESS is subjected to the follow dynamics,
\begin{subequations}
	\label{5}
	\begin{equation}
		\label{5a}
		\dot E_i=\frac{-1}{3600}P_i,
	\end{equation}
	\begin{equation}
		\label{5b}
		\dot{\hat E}_{a,i}=u_i^E,
	\end{equation}
	\begin{equation}
		\label{5c}
		P_i=\frac{E_i}{\hat E_{a,i}}\hat P_{a,i},
	\end{equation}
	\begin{equation}
		\label{5d}
		\dot{\hat P}_{a,i}=u_i^P,
	\end{equation}
	\begin{equation}
		\label{5e}
		\omega_i=\omega_i^*-{\hat P}_{a,i},
	\end{equation}
	\begin{equation}
		\label{5f}
		\dot \omega _i^* = u_i^\omega,
	\end{equation}
\end{subequations} 
where \eqref{5c} and \eqref{5e} are \textcolor{blue}{a} power allocation scheme and a modified droop controller, respectively, $\hat E_{a,i}$ and $\hat P_{a,i}$ are the estimation of $E_a$ and $P_a$\textcolor{blue}{, respectively}. The detailed objectives are described in Problem 1,
\newtheorem{problem}{Problem}
\begin{problem} \label{problem}
	Consider a multi-BESS network including $N$ BESSs with droop control and subjected to dynamics (\ref{1a})-(\ref{1d}). We intend to design \textcolor{blue}{some} distributed \textcolor{blue}{controllers} for each BESS to solve the following problems, 	
	\\(1) The real time active power sharing state of each battery can be well estimated, and reactive sharing can be achieved, i.e.,
	\begin{center}
		$\mathop {\lim }\limits_{t \to \infty }\left| {{\hat P_{a,i} } - {P_a}} \right| = 0$, $\forall i \in \left\{ {1,2, \cdots ,N} \right\}.$
	\end{center}
	(2) On the premise that $\hat P_{a,i}$ is well distributed estimated, a nominal frequency controller is designed to achieve secondary control, i.e.,
	\begin{center}
		$\mathop {\lim }\limits_{t \to \infty }\left| {{\omega _i} - {\omega ^r}} \right| = 0$,
		$\forall i \in \left\{ {1,2, \cdots ,N} \right\},$
	\end{center}
	where $\omega^r$ denotes the reference frequency.\\
	(3) Thus, to achieve SoC balance, the real time SoC balance state should be well estimated, i.e., 
	\begin{center}
		$\mathop {\lim }\limits_{t \to \infty }\left| {{\hat E_{a,i}} - E_a} \right| \le \epsilon$, $\forall i \in \left\{ {1,2, \cdots ,N} \right\},$\\
	\end{center}
	where $\epsilon$ is a prescribed positive constant.
\end{problem}
\section{Distributed secondary control schemes design}
\subsection{Graph theory}
\par In general, agents communicate with each other on a topology ${\cal G} = ({\cal V},{\cal E})$. ${\cal V}$ stands for the set embracing $n$ nodes, each of which represents an agent. Meanwhile, ${\cal E}$ is defined as the set of edge, if there exists an edge from nodes $j$ to $i$, ${a_{ij}} =1$, which means the agent $i$ has access to the agent $j$'s information; if not, ${a_{ij}} = 0$. Besides, self-edge is not considered in this paper, i.e., ${a_{ii}} = 0$ for all nodes. Such a graph is called the directed graph, which contains a spanning tree and a root node that has no parent. According to the above statement, we define ${\cal A} = [{a_{ij}}]$ as the adjacency matrix, and ${\cal D} = diag({d_1},{d_2}, \cdots ,{d_n})$ as the in-degree matrix with ${d_i} = \sum\limits_{j \ne i} {{a_{ij}}} $. Therefore, the Laplacian matrix corresponding ${\cal G}$ is ${\cal L} = {\cal D} - {\cal A}$.
\par The Leader-Follower model is mainly applied in this paper. Let ${\cal B} = diag\{ {b_{11}},{b_{22}}, \cdots ,{b_{nn}}\} $ be \textcolor{blue}{an} adjacency matrix between the leader and all \textcolor{blue}{followers}. If the information of the leader can be acquired by the follower $i$, ${b_{ii}} = 1$; otherwise, ${b_{ii}} = 0$. Denote ${\cal H}={\cal L}+{\cal B}$.
\begin{assumption}
	\label{assum1}
	Assume that $\cal G$ is a connected undirected graph, and at least one follower can access the leader.
\end{assumption}
\begin{lemma}\label{lemma3}
	\cite{Mesbahi2010a} Under Assumption \ref{assum1}, $\cal L$ is positively semidefinite, and ${\cal H}={\cal L}+{\cal B}$ is positively definite.
\end{lemma}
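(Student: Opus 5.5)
I would prove both claims of Lemma \ref{lemma3} directly through the quadratic form of ${\cal L}$. Under Assumption \ref{assum1} the graph is undirected, so $a_{ij}=a_{ji}$ and ${\cal L}={\cal D}-{\cal A}$ is symmetric. The first step is to compute, for any $x\in\mathbb{R}^n$,
\begin{equation*}
x^T{\cal L}x=\sum_{i=1}^n d_i x_i^2-\sum_{i,j}a_{ij}x_ix_j=\frac{1}{2}\sum_{i,j}a_{ij}(x_i-x_j)^2\ge 0 .
\end{equation*}
This gives positive semidefiniteness of ${\cal L}$ at once. Since ${\cal L}$ is symmetric, its eigenvalues are real and nonnegative.

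For ${\cal H}={\cal L}+{\cal B}$, note that ${\cal B}$ is diagonal with entries $b_{ii}\in\{0,1\}$. Hence ${\cal H}$ is symmetric and
\begin{equation*}
x^T{\cal H}x=\frac{1}{2}\sum_{i,j}a_{ij}(x_i-x_j)^2+\sum_{i=1}^n b_{ii}x_i^2\ge 0 .
\end{equation*}
To show strict positivity, I suppose $x^T{\cal H}x=0$ and show that $x=0$. Both sums are nonnegative, so both must vanish.

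The first sum vanishing means $x_i=x_j$ whenever $a_{ij}=1$. Because ${\cal G}$ is connected, propagating this equality along paths gives $x=c\mathbf{1}$ for some scalar $c$. The second sum then equals $c^2\sum_i b_{ii}$. Assumption \ref{assum1} guarantees $b_{kk}=1$ for at least one follower $k$, so this sum forces $c=0$. Therefore $x=0$, and ${\cal H}$ is positive definite.

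The only step requiring care is the connectivity argument: it is what shows that the kernel of ${\cal L}$ is exactly $\mathrm{span}\{\mathbf{1}\}$. The pinning term then excludes this kernel. Everything else is routine algebra.
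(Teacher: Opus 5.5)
Your proof is correct and complete. The paper gives no proof of this lemma and only cites Mesbahi and Egerstedt; your argument is the standard one behind that citation: the edge-wise identity $x^T\mathcal{L}x=\frac{1}{2}\sum_{i,j}a_{ij}(x_i-x_j)^2$ (which rests on the symmetry granted by Assumption~\ref{assum1}), connectivity forcing $x=c\mathbf{1}$, and the pinned follower forcing $c=0$. Your observation that $\ker\mathcal{L}=\mathrm{span}\{\mathbf{1}\}$ also justifies the paper's subsequent unproved claim that $\mathcal{L}$ has exactly one zero eigenvalue and $n-1$ positive ones.
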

\par Under Assumption \ref{assum1}, $\cal L$ has a zero eigenvalue and $n-1$ positive eigenvalues, and $\cal H$ has $n$ positive eigenvalues. Let $0=\lambda_1({\cal L})<\lambda_2({\cal L})\le\cdots\le\lambda_n({\cal L})$ and $\lambda_m=\lambda_1({\cal H})\le\lambda_2({\cal H})\le\cdots\le\lambda_n({\cal H})=\lambda_M$ be the non decreasing eigenvalue sequences of $\cal L$ and $\cal H$\textcolor{blue}{, respectively}.
\subsection{Distributed asymptotical control scheme}
Inspired by \cite{Wu2022, George2019, Meng2021}, the distributed SoC balance and power sharing estimators can be designed for each battery unit here, as shown in (\ref{6a})-(\ref{6c}),
\begin{subequations}
	\label{6}
	\begin{equation}
		\label{6a}
		\hat E_{a,i}=q_i+E_i,
	\end{equation}
	\begin{equation}
		\label{6b}
		\dot q_i=-\alpha q_i-\beta\sum\limits_{j \in N_i} a_{ij}(\hat E_{a,i}-\hat E_{a,j})-v_i,
	\end{equation}
	\begin{equation}
		\label{6c}
		\dot v_i=\alpha\beta\sum\limits_{j \in N_i} a_{ij}(\hat E_{a,i}-\hat E_{a,j}),
	\end{equation}
	\begin{equation}
		\label{6d}
		\dot{\hat P}_{a,i}=-\kappa\sum\limits_{j \in N_i} a_{ij}({\hat P}_{a,i}-{\hat P}_{a,j}),
	\end{equation}
\end{subequations}
where ${v_i}$ and $q_i$ are the designed intermediate states. Here, let initial condition ${\hat E_{a,i}} = {E_i}$, ${v_i} = 0$, $q_i=0$, and ${\hat P_{a,i}} = {P_i}$ for $i \in \left\{ {1,2, \cdots ,N} \right\}$, $\alpha >0$, $\beta  > 0$ and $\kappa  > 0$ are design parameters. In addition, the nominal frequency controller can be constructed in (\ref{7}),
\begin{equation}
	\label{7}
	\dot\omega_i^*=-\kappa(\sum\limits_{j = 1}^N a_{ij}(\omega_i^*-\omega_j^*)+b_{ii}(\omega_i^*-{\hat P}_{a,i}-\omega ^r))\textcolor{blue}{.}
\end{equation}
Due to the fact that the nominal frequency is not \textcolor{blue}{a variable that needs to be measured} of a BESS, the use of this controller can reduce the installation of measurement equipment and measurement errors.
\par To illustrate the stability analysis of (\ref{6}) and (\ref{7}), the following useful lemmas are given.
\begin{lemma} \label{lemma1}
	\cite{George2019} Consider the system dynamics as
	$${{{\dot x}_i} = {{\dot u}_i} - \alpha \left( {{x_i} - {u_i}} \right) - \beta \mathop \sum \limits_{j = 1}^N {L_{ij}}{x_j} - {v_i}}\textcolor{blue}{,}$$	
	$${{{\dot v}_i} = \alpha \beta \mathop \sum \limits_{j = 1}^N {L_{ij}}{x_j}}\textcolor{blue}{.}$$
	Let
	$${{y_i} = {x_i} - \frac{1}{N}\mathop \sum \limits_{j = 1}^N {u_j},\quad i \in \{ 1, \ldots ,N\} }\textcolor{blue}{,}$$
	$${w = v - \bar v,\quad \bar v = {\Pi _N}(\dot u + \alpha u)}\textcolor{blue}{,}$$
	where $u=[u_1,u_2, \cdots, u_N]$, ${\Pi _N} = {I_N} - \frac{1}{N}{1_N}1_N^T$. Then, the system can be rewritten as follow
	$$\left[ {\begin{array}{*{20}{c}}
			{\dot y}\\
			{\dot w}
	\end{array}} \right] = A\left[ {\begin{array}{*{20}{l}}
			y\\
			w
	\end{array}} \right]\textcolor{blue}{,}$$
	where
	$$A = \left[ {\begin{array}{*{20}{c}}
			{ - \alpha {I_N} - \beta L}&{ - {I_N}}\\
			{\alpha \beta L}&0
	\end{array}} \right]\textcolor{blue}{.}$$
	If $\alpha $, $\beta  > 0$, and the communication topology satisfies Assumption 1, then
	${y_i}(t) \to  - \frac{{{\alpha ^{ - 1}}}}{N}\mathop \sum \limits_{j = 1}^N {w_j}(0)$,
	${w_i}(t) \to \frac{1}{N}\mathop \sum \limits_{j = 1}^N {w_j}(0)$, as $t \to \infty $, $\forall i \in \{ 1, \ldots ,N\} $\textcolor{blue}{.}
\end{lemma}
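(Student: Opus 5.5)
The plan is to verify the change of variables directly, and then split the resulting linear system along the eigenvectors of the symmetric Laplacian ${\cal L}$. Under Assumption \ref{assum1}, ${\cal L}$ is symmetric positive semidefinite by Lemma \ref{lemma3}, with $\mathcal{L}1_N=0$ and simple zero eigenvalue. So we can write ${\cal L}=U\Lambda U^T$ with $U$ orthogonal, first column $\frac{1}{\sqrt N}1_N$, and $\Lambda=\mathrm{diag}(0,\lambda_2,\dots,\lambda_N)$, $\lambda_k>0$ for $k\ge 2$. Here $L$ in the statement is ${\cal L}$.

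\emph{Step 1: the error system.} From $y=x-\frac1N 1_N1_N^Tu$ we get $x-u=y-\Pi_Nu$. Since ${\cal L}1_N=0$, also ${\cal L}x={\cal L}y$. Substituting into the $x$-dynamics gives
\begin{equation*}
\dot y=\Pi_N\dot u+\alpha\Pi_N u-\alpha y-\beta{\cal L}y-v=-\alpha y-\beta {\cal L}y-w .
\end{equation*}
Similarly, $\dot w=\alpha\beta{\cal L}y-\dot{\bar v}$. To obtain exactly the matrix $A$, I will take $\bar v$ to be constant, which is the setting of \cite{George2019}, where $\dot u+\alpha u$ is treated as piecewise constant. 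I expect this to be the main delicate point. If $\bar v$ is not constant, I would instead keep $-\dot{\bar v}$ as an input. This input lies in the range of $\Pi_N$, so it does not affect the mean of $w$, and a vanishing input only perturbs the exponentially stable part treated in Step 3.

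\emph{Step 2: the consensus direction.} Left-multiplying the $w$-equation by $1_N^T$ gives $1_N^T\dot w=\alpha\beta 1_N^T{\cal L}y=0$. Hence $\bar w:=\frac1N1_N^Tw(t)=\frac1N\sum_j w_j(0)$ for all $t$. Next, $\bar y:=\frac1N1_N^Ty$ satisfies $\dot{\bar y}=-\alpha\bar y-\bar w$, a scalar stable ODE with constant forcing. Therefore $\bar y(t)\to-\alpha^{-1}\bar w=-\frac{\alpha^{-1}}{N}\sum_jw_j(0)$ exponentially.

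\emph{Step 3: the disagreement directions.} Set $\tilde y=U^Ty$ and $\tilde w=U^Tw$. The system decouples into $2\times2$ blocks: for each $k\ge2$,
\begin{equation*}
\begin{bmatrix}\dot{\tilde y}_k\\ \dot{\tilde w}_k\end{bmatrix}=\begin{bmatrix}-\alpha-\beta\lambda_k & -1\\ \alpha\beta\lambda_k & 0\end{bmatrix}\begin{bmatrix}\tilde y_k\\ \tilde w_k\end{bmatrix}.
\end{equation*}
Each block has characteristic polynomial $s^2+(\alpha+\beta\lambda_k)s+\alpha\beta\lambda_k=(s+\alpha)(s+\beta\lambda_k)$. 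Its roots $-\alpha$ and $-\beta\lambda_k$ are strictly negative. This holds even when $\alpha=\beta\lambda_k$, where the repeated root only adds a polynomial factor $t$, which is still dominated by the exponential decay. Thus $\tilde y_k,\tilde w_k\to0$ exponentially for every $k\ge2$.

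Recombining $y=U\tilde y$ and $w=U\tilde w$, only the components along $1_N$ survive. Hence $y_i(t)\to\bar y(\infty)=-\frac{\alpha^{-1}}{N}\sum_jw_j(0)$ and $w_i(t)\to\frac1N\sum_jw_j(0)$ for every $i$, which is the claim.
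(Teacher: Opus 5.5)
Your proof is correct. The paper gives no proof of its own here, since the lemma is quoted from \cite{George2019}, so there is no in-paper argument to compare with. Your argument is complete and self-contained for the stated limits. It consists of the check of the change of variables, the conservation of $1_N^Tw$ for the consensus mode, and the diagonalization ${\cal L}=U\Lambda U^T$, which reduces the disagreement dynamics to $2\times 2$ blocks with characteristic polynomial $(s+\alpha)(s+\beta\lambda_k)$. Note also that $1_N^Tw=1_N^Tv$, so with the paper's initialization $v(0)=0$ your Step~2 gives $y\to 0$ exactly.

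The point you called delicate in Step~1 is a missing hypothesis of the statement, not a limitation of your method. In general $\dot w=\alpha\beta{\cal L}y-\Pi_N(\ddot u+\alpha\dot u)$. The asserted rewriting $[\dot y;\dot w]=A[y;w]$ therefore holds only if $\Pi_N(\ddot u+\alpha\dot u)\equiv 0$, and without that the conclusions can fail. For example, take a constant forcing $d=\Pi_N(\ddot u+\alpha\dot u)\neq 0$ and let $\tilde d_k$ be the $k$th component of $U^Td$. Mode $k$ then settles at the offset $\tilde y_k\to\tilde d_k/(\alpha\beta\lambda_k)$, and $w$ no longer reaches consensus.

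Your fallback remark is right. The forcing lies in the range of $\Pi_N$, so the mean dynamics are untouched, and the same limits persist when the forcing merely tends to zero.

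Your modal form also gives more than the bare citation does. The impulse response of $1/((s+\alpha)(s+\beta\lambda_k))$ is nonnegative with integral $1/(\alpha\beta\lambda_k)$. This yields an ultimate bound on the disagreement of order $\sup_t\Vert\Pi_N(\ddot u+\alpha\dot u)\Vert/(\alpha\beta\lambda_2)$. That is the kind of estimate, namely Lemma~\ref{lemma2}, that the paper actually needs in Theorem~\ref{TH1}. There $u=E$ is time-varying and $\bar v=\Pi_N(\dot E+\alpha E)$ is not constant, so the exact limits of this lemma do not apply.
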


\begin{lemma}\label{lemma2} 
	\cite{Kia2014a} Based on Lemma \ref{lemma1}, there exist a positive constant $\gamma_1$, if $\alpha>0$ and $\beta>0$, $y_i$ satisfies  
	$$\mathop {\lim }\limits_{t \to \infty } \left| {y_i} \right| \leq \frac{\gamma_1}{\beta \lambda_2}\textcolor{blue}{.}$$
\end{lemma}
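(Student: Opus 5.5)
The plan is to redo the change of variables of Lemma \ref{lemma1} without assuming $u$ constant, split $y$ into its average and disagreement parts, and bound the disagreement part mode by mode. I will assume, as is natural for (\ref{5a}) with bounded load power, that $\dot u$ is uniformly bounded, $\sup_t\|\dot u(t)\|\le \gamma_1$ (with $\ddot u$ existing). For time-varying $u$, differentiating $y=x-\frac1N 1_N1_N^Tu$ and using $\bar v=\Pi_N(\dot u+\alpha u)$ gives
$$\dot y=-(\alpha I_N+\beta{\cal L})y-w,\qquad \dot w=\alpha\beta{\cal L}y-\Pi_N(\ddot u+\alpha\dot u).$$
This is the homogeneous system $A$ of Lemma \ref{lemma1} plus an exogenous input.

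\textbf{Step 1: average component.} Under Assumption \ref{assum1}, ${\cal L}$ is symmetric, so it has an orthonormal eigenbasis $T=[\frac{1}{\sqrt N}1_N,\;R]$ with $R^T{\cal L}R=\mathrm{diag}(\lambda_2,\dots,\lambda_N)$. Since $1_N^T{\cal L}=0$ and $1_N^T\Pi_N=0$, we get $1_N^T\dot w=0$, so $1_N^Tw$ is constant. Then
$$\tfrac{d}{dt}(1_N^Ty)=-\alpha\,1_N^Ty-1_N^Tw(0),$$
and $1_N^Ty$ converges exponentially to $-\alpha^{-1}1_N^Tw(0)$, exactly as in Lemma \ref{lemma1}. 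I expect this offset to be a constant determined by the initial conditions (with the initialization $v_i(0)=0$ it is zero, or small). In any case it is absorbed into $\gamma_1$ or handled separately.

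\textbf{Step 2: disagreement modes.} Set $z=R^Ty$ and $\zeta=R^Tw$. Each mode $k\ge2$ decouples into the scalar pair
$$\dot z_k=-(\alpha+\beta\lambda_k)z_k-\zeta_k,\qquad \dot\zeta_k=\alpha\beta\lambda_k z_k-r_k,$$
where $r_k=e_k^TR^T(\ddot u+\alpha\dot u)$. Eliminating $\zeta_k$ gives
$$\ddot z_k+(\alpha+\beta\lambda_k)\dot z_k+\alpha\beta\lambda_k z_k=r_k.$$
In Laplace form this reads $(s+\alpha)(s+\beta\lambda_k)Z_k=(s+\alpha)\,\mathcal{U}_k$, with $\mathcal{U}_k$ the transform of $e_k^TR^T\dot u$. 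This is the key observation: the factor $s+\alpha$ cancels, and after the stable $e^{-\alpha t}$ transient dies out, $z_k$ obeys
$$\dot z_k=-\beta\lambda_k z_k+e_k^TR^T\dot u.$$

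\textbf{Step 3: conclusion.} From the first-order equation in Step 2, $\limsup_{t\to\infty}|z_k|\le \gamma_1/(\beta\lambda_k)\le\gamma_1/(\beta\lambda_2)$. Transforming back through the orthogonal $T$ yields $\limsup|y_i|\le\gamma_1/(\beta\lambda_2)$, up to adjusting $\gamma_1$ by a $\sqrt N$-type factor and the Step 1 constant.

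\textbf{Main obstacle.} The hardest part is doing the cancellation in Step 2 rigorously in the time domain rather than formally in Laplace variables. I would introduce $\xi_k=\dot z_k+\alpha z_k$, or equivalently $\zeta_k-\alpha e_k^TR^T\dot u$, and check that it satisfies a first-order equation driven by $\dot u$ with pole $-\beta\lambda_k$. This requires tracking initial conditions so that no unbounded term appears.

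A fallback if the cancellation fails is to bound the full $2\times2$ Hurwitz block via its ISS gain. That gives a bound of order $\sup|\ddot u+\alpha\dot u|/(\alpha\beta\lambda_2)$, which still has the stated $1/(\beta\lambda_2)$ form with a different $\gamma_1$.
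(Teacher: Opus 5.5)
The paper gives no proof of this lemma; it is imported from \cite{Kia2014a}. Your strategy is the standard argument for this estimator and is correct in outline: show that the $-\alpha$ mode is never excited by the input, so the disagreement error obeys a first-order consensus equation driven by $\dot u$. On each disagreement mode the characteristic polynomial is $(s+\alpha)(s+\beta\lambda_k)$, and the forcing is $(s+\alpha)$ applied to $d_k:=e_k^TR^T\dot u$. Your Laplace computation is already essentially a proof, because the initial-condition terms are explicit combinations of $e^{-\alpha t}$ and $e^{-\beta\lambda_k t}$.

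\textbf{The step that would fail.} The time-domain variable in your last paragraph is wrong. First, $\dot z_k+\alpha z_k=-\beta\lambda_k z_k-\zeta_k$, which is not the same as $\zeta_k-\alpha d_k$. Second, this $\xi_k$ satisfies $\dot\xi_k=-\beta\lambda_k\xi_k+r_k$. It has the right pole, but it is driven by $\ddot u+\alpha\dot u$, because it applies the factor $(s+\alpha)$ to $z_k$ instead of removing it. Following it only reproduces your fallback bound. The variable that isolates the $-\alpha$ mode is
\[
\chi_k:=\alpha z_k+\zeta_k+d_k=-(\dot z_k+\beta\lambda_k z_k-d_k),\qquad \dot\chi_k=-\alpha\chi_k \ \text{ for every input } u.
\]
Hence $\dot z_k=-\beta\lambda_k z_k+d_k-\chi_k(0)e^{-\alpha t}$, which is exactly the equation your Step 3 needs.

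\textbf{A cleaner route.} It is simpler to do this before any modal split. The quantity $\theta:=\alpha(x-u)+v$ satisfies $\dot\theta=-\alpha\theta$ componentwise, so $\dot y=-\beta{\cal L}y+\Pi_N\dot u-e^{-\alpha t}\theta(0)$. This needs neither $\bar v$ nor $\ddot u$. That matters here, since $\dot u=\dot E$ jumps whenever $\hat P_{a,i}$ is reset at a load switch. With the paper's initialization ($x(0)=u(0)$, $v(0)=0$), you get $\theta(0)=0$ and $1_N^Ty\equiv 0$. Then $\frac{d}{dt}\frac12\|y\|^2\le-\beta\lambda_2\|y\|^2+\|y\|\,\|\Pi_N\dot u\|$ gives $\limsup_{t\to\infty}\|y\|\le\sup_t\|\Pi_N\dot u\|/(\beta\lambda_2)$, with no $\sqrt N$ loss.

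\textbf{Two smaller points.}
\begin{itemize}
\item In Step 1, $1_N^Tw(0)=1_N^Tv(0)$ exactly, since $1_N^T\Pi_N=0$. The average offset is therefore $-\alpha^{-1}1_N^Tv(0)/N$, and it vanishes precisely because $\sum_j v_j(0)=0$. That hypothesis is genuinely needed: a $\beta$-independent offset cannot be ``absorbed'' into $\gamma_1/(\beta\lambda_2)$ unless $\gamma_1$ grows with $\beta$, which would make the lemma empty.
\item Your fallback is valid only under the extra assumption that $\ddot u$ is bounded, and its $\gamma_1$ then depends on $\alpha$. The meaningful reading of the lemma uses $\limsup$, with $\gamma_1$ a bound on $\|\Pi_N\dot u\|$ (i.e.\ on the power, via \eqref{5a}). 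That is what your corrected Step 2 delivers.
\end{itemize}
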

\begin{lemma}\label{lemma21} 
	\cite{Nguyen2021a} Consider a dynamic system
	$$\dot x_i=-\kappa(\sum\limits_{j = 1}^N a_{ij}(x_i-x_j)+b_{ii}(x_i-x_0)).$$
	There exist a positive constant $\gamma_2$, if $\kappa>0$, $x_i$ satisfies  
	$$\mathop {\lim }\limits_{t \to \infty } \left| {x_i-x_0} \right| \leq \frac{\gamma_2\Phi}{\kappa},$$
	where $\Phi=\mathop {max}\limits_{t<\infty}{\vert \dot x_0\vert}$.
\end{lemma}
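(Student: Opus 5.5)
Lemma \ref{lemma21} is the standard tracking bound for leader--follower consensus with a time-varying leader. I would prove it with a quadratic Lyapunov argument on the stacked tracking error, using the positive definiteness of ${\cal H}$ from Lemma \ref{lemma3}.

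\textbf{Error dynamics.} Stack the states as $x=[x_1,\dots,x_N]^T$ and define the tracking error $e=x-1_N x_0$. Since ${\cal L}1_N=0$, the dynamics can be written compactly as
\begin{equation*}
\dot x=-\kappa\left({\cal L}x+{\cal B}(x-1_Nx_0)\right)=-\kappa{\cal H}e .
\end{equation*}
Differentiating $e$ then gives
\begin{equation*}
\dot e=-\kappa{\cal H}e-1_N\dot x_0 .
\end{equation*}
This is an exponentially stable linear system driven by the bounded input $-1_N\dot x_0$, with $|\dot x_0|\le\Phi$. By Lemma \ref{lemma3} and Assumption \ref{assum1}, ${\cal H}$ is symmetric positive definite with smallest eigenvalue $\lambda_m>0$.

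\textbf{Lyapunov estimate.} Take $V=\frac12 e^Te$. Along trajectories,
\begin{equation*}
\dot V=-\kappa e^T{\cal H}e-e^T1_N\dot x_0\le-\kappa\lambda_m\|e\|^2+\sqrt N\,\Phi\,\|e\| .
\end{equation*}
Hence $\dot V<0$ whenever $\|e\|>\sqrt N\Phi/(\kappa\lambda_m)$. Therefore $\|e\|$ enters the ball of that radius and remains in it asymptotically. The cleanest way to make this precise is the comparison lemma on $\|e\|$, where $\frac{d}{dt}\|e\|\le-\kappa\lambda_m\|e\|+\sqrt N\Phi$ holds whenever $e\neq0$. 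This yields
\begin{equation*}
\|e(t)\|\le e^{-\kappa\lambda_m t}\|e(0)\|+\frac{\sqrt N\Phi}{\kappa\lambda_m}.
\end{equation*}

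\textbf{Conclusion.} Taking the limit gives
\begin{equation*}
\limsup_{t\to\infty}|x_i-x_0|\le\limsup_{t\to\infty}\|e\|\le\frac{\sqrt N\Phi}{\kappa\lambda_m}.
\end{equation*}
This is the claim with $\gamma_2=\sqrt N/\lambda_m$. The same bound also follows directly from the variation-of-constants formula, $e(t)=e^{-\kappa{\cal H}t}e(0)-\int_0^t e^{-\kappa{\cal H}(t-s)}1_N\dot x_0(s)\,ds$, together with $\|e^{-\kappa{\cal H}t}\|=e^{-\kappa\lambda_m t}$, which holds because ${\cal H}$ is symmetric.

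\textbf{Main obstacle.} The only delicate point is interpretive: the statement writes $\lim$, but when $\dot x_0$ does not vanish the limit need not exist. I would therefore state the result as a $\limsup$, or as an ultimate bound. The bound also implicitly requires $\Phi<\infty$, i.e. $\dot x_0$ bounded, which I would state as an assumption. Beyond that the argument is routine.
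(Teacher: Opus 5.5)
Your proof is correct. It also has nothing in the paper to be compared against: the paper gives no proof of Lemma \ref{lemma21} and simply imports the bound from \cite{Nguyen2021a}.

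Your self-contained argument adds precision that the citation lacks. Rewriting the dynamics as $\dot e=-\kappa{\cal H}e-1_N\dot x_0$ and using ${\cal H}\succ 0$ (Lemma \ref{lemma3} under Assumption \ref{assum1}) yields an explicit constant $\gamma_2=\sqrt N/\lambda_m$. This constant depends only on the graph, not on $\kappa$, which is what makes the $1/\kappa$ scaling meaningful. Your proof also exposes the hypotheses the statement leaves implicit:
\begin{itemize}
\item a single leader signal $x_0$ shared by all pinned followers;
\item $x_0$ differentiable with $\sup_t|\dot x_0|<\infty$;
\item a connected graph with at least one pinned follower, without which the bound fails.
\end{itemize}
Your replacement of $\lim$ by $\limsup$ is also the right correction.

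Of your two justifications, the variation-of-constants one is the cleaner, since it sidesteps the nondifferentiability of $\|e\|$ at $e=0$. If you keep the comparison-lemma version, work with the upper Dini derivative; at $e=0$ it still satisfies $D^+\|e\|\le\sqrt N\Phi$, so the differential inequality holds everywhere.

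You can also get a sharper, componentwise constant from the same formula. Since $-{\cal H}$ is Metzler, $e^{-\kappa{\cal H}t}$ is entrywise nonnegative, and the formula gives
\[
\limsup_{t\to\infty}|x_i-x_0|\le\frac{\Phi}{\kappa}\,({\cal H}^{-1}1_N)_i .
\]
This bound is attained when $\dot x_0$ is constant. It never exceeds your $\sqrt N\Phi/(\kappa\lambda_m)$, because $0\le({\cal H}^{-1}1_N)_i\le\|{\cal H}^{-1}\|\,\|1_N\|=\sqrt N/\lambda_m$.
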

\par Based on the above analysis, we have the following result about the distributed control \textcolor{blue}{schemes} \eqref{6} and \eqref{7}. And the relevant proof will be given next.
\begin{theorem}
	\label{TH1}
	Consider a multi-BESS system contains $n$ BESS equipped with the dynamics \eqref{5a}-\eqref{5e}. Let the communication topology satisfies Assumptions \ref{assum1}. Then, if $\alpha $, $\kappa $, $\beta  > 0$, Problem 1 will be solved with the distributed control algorithm \eqref{6} and \eqref{7} in asymptotic manner. And we have $\mathop {\lim }\limits_{t \to \infty } \left| {\hat E_{a,i}-E_a} \right| \leq \frac{\gamma}{\beta \lambda_2}$, where $\gamma=\Vert E\Vert_\infty<+\infty$.
\end{theorem}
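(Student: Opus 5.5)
The proof splits Theorem \ref{TH1} into its three objectives and handles each with one of the earlier lemmas. The power estimator \eqref{6d} feeds the frequency controller \eqref{7}, so the order is: first (1) power sharing, then (2) frequency restoration, and finally (3) the SoC estimator, which needs the bound of Lemma \ref{lemma2}.

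\emph{Step 1 (power sharing).} Stack \eqref{6d} as $\dot{\hat P}_a=-\kappa{\cal L}\hat P_a$. Under Assumption \ref{assum1}, ${\cal L}$ is symmetric and $1_N^T{\cal L}=0$, so $\sum_i \hat P_{a,i}$ is invariant. With the initialization $\hat P_{a,i}(0)=P_i(0)$ this sum equals $NP_a$. Using the eigen-decomposition of ${\cal L}$ and Lemma \ref{lemma3} ($\lambda_2({\cal L})>0$), the disagreement vector $\Pi_N\hat P_a$ decays like $e^{-\kappa\lambda_2 t}$. Hence $\hat P_{a,i}\to P_a$. Here I treat $P_a$, the total/average load power, as the constant fixed at initialization. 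This identification is the delicate modeling point, and I would state it explicitly.

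\emph{Step 2 (frequency restoration).} Let $x_i=\omega_i^*$ and $x_0=\omega^r+P_a$. Controller \eqref{7} differs from the form in Lemma \ref{lemma21} by the perturbation $b_{ii}(\hat P_{a,i}-P_a)$, which decays exponentially by Step 1.
\begin{itemize}
\item The error $e=\omega^*-1_N x_0$ then satisfies $\dot e=-\kappa{\cal H}e+\kappa{\cal B}(\hat P_a-1_NP_a)$.
\item ${\cal H}$ is positive definite by Lemma \ref{lemma3}, and the input is exponentially vanishing, so an ISS argument gives $e\to0$. Equivalently, apply Lemma \ref{lemma21} with $\Phi=0$, since $x_0$ is constant.
\item By \eqref{5e}, $\omega_i=\omega_i^*-\hat P_{a,i}\to \omega^r+P_a-P_a=\omega^r$.
\end{itemize}

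\emph{Step 3 (SoC estimation).} Rewrite \eqref{6a}--\eqref{6c} as in Lemma \ref{lemma1} with $x_i=\hat E_{a,i}$ and $u_i=E_i$. Since $\dot{\hat E}_{a,i}=\dot q_i+\dot E_i$ and $q_i=x_i-u_i$, we get
\[
\dot x_i=\dot u_i-\alpha(x_i-u_i)-\beta\textstyle\sum_j L_{ij}x_j-v_i ,
\]
which matches the lemma's form exactly. The initial conditions $q_i(0)=0$ and $v_i(0)=0$ give $1_N^Tv(t)\equiv 0$, because $1_N^T{\cal L}=0$. This yields the zero-mean property on which the conclusion of Lemma \ref{lemma1} relies.

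\emph{Main obstacle.} The reference signal $u=E$ is time-varying, since SoC drifts under \eqref{5a}, so exact tracking fails. Lemma \ref{lemma2} handles this: it bounds the residual by $\gamma_1/(\beta\lambda_2)$, with $\gamma_1$ depending on the size of $E$ and its derivative. I would identify $\gamma_1$ with $\gamma=\Vert E\Vert_\infty$. This is justified because $E_i\in[0,1]$ and, by \eqref{5a} and \eqref{5c}, $\dot E_i$ is bounded once $\hat E_{a,i}$ stays away from zero.

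Closing this loop is the real difficulty. The bound on $\dot E_i$ needs $\hat E_{a,i}$ bounded away from zero, and that in turn needs the estimator bound itself. I would first try a bootstrapping argument on a finite horizon where $E_a>0$, choosing $\beta$ large enough that $\gamma/(\beta\lambda_2)<\tfrac12\min E_a$. Once this holds, $\lim|\hat E_{a,i}-E_a|\le \gamma/(\beta\lambda_2)$, and picking $\beta\ge\gamma/(\lambda_2\epsilon)$ delivers objective (3) of Problem \ref{problem}.
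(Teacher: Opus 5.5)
Your proposal is sound at the level of rigor of the paper's proof. For objective (3) it follows the paper's route: the change of variables of Lemma~\ref{lemma1}, then Lemma~\ref{lemma2}. For (1) and (2) it argues differently.

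\textbf{Objective (1).} The paper uses $V_P=\frac12\sum_i\Delta P_i^2$. But $\dot V_P=-\frac{\kappa}{2}\sum_{i,j}a_{ij}(\Delta P_i-\Delta P_j)^2$ is only negative semidefinite and vanishes on the consensus subspace. So concluding $\Delta P\to0$ silently needs exactly your observation that $1_N^T\hat P_a$ is conserved and equals $NP_a$ under the initialization. This assumes $P_a$ is held fixed, which the paper also does when it writes $\Delta\dot P_i=\dot{\hat P}_{a,i}$. Your version is the more complete one.

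\textbf{Objective (2).} The paper's second argument does not use (1) at all. Since \eqref{6d} and \eqref{7} share the gain $\kappa$, the $\kappa\mathcal{L}\hat P_a$ terms cancel:
\[
\dot\omega=\dot\omega^*-\dot{\hat P}_a=-\kappa\mathcal{L}(\omega^*-\hat P_a)-\kappa\mathcal{B}(\omega^*-\hat P_a-\omega^r1_N)=-\kappa\mathcal{H}(\omega-\omega^r1_N).
\]
Hence $\Delta\omega(t)=e^{-\kappa\mathcal{H}t}\Delta\omega(0)$ exactly, and restoration is exponential at rate $\kappa\lambda_m$ whether or not $P_a$ is constant or $\hat P_{a,i}$ has converged. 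The modeling caveat you raise is therefore irrelevant for (2). Your cascade/ISS argument is also valid, but it needs Step 1 and a constant $P_a$; in exchange it identifies the limit $\omega_i^*\to\omega^r+P_a$. Your aside ``equivalently, Lemma~\ref{lemma21} with $\Phi=0$'' is loose. The pinning signal in \eqref{7} is the node-dependent, time-varying $\hat P_{a,i}+\omega^r$, so that lemma applies only after $\hat P_a$ has converged; the ISS step is what does the work.

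\textbf{Objective (3).} Your remark that $v_i(0)=0$ forces $1_N^Tv\equiv0$, hence $\sum_jw_j(0)=0$, fills a step the paper leaves implicit. Two weaknesses remain, and the paper shares the first.

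(a) Your reason for setting $\gamma_1=\Vert E\Vert_\infty$ only shows that some finite constant exists; nothing yields that particular constant. Also, the autonomous form $\dot z=Az$ of Lemma~\ref{lemma1} needs $\dot{\bar v}=0$, because $\dot w=\alpha\beta\mathcal{L}y-\dot{\bar v}$; this fails while $E$ drifts. A direct argument avoids both problems. With $q(0)=v(0)=0$ you have $\frac{d}{dt}(v+\alpha q)=-\alpha(v+\alpha q)$, so $v\equiv-\alpha q$. Then \eqref{6a}--\eqref{6c} collapse to $\dot{\hat E}_a=\dot E-\beta\mathcal{L}\hat E_a$, and $\alpha$ drops out. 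The error $y=\hat E_a-E_a1_N$ satisfies $\dot y=-\beta\mathcal{L}y+\Pi_N\dot E$ with $1_N^Ty\equiv0$, so for every $t$
\[
\Vert y(t)\Vert\le e^{-\beta\lambda_2t}\Vert\Pi_NE(0)\Vert+\frac{1}{\beta\lambda_2}\sup_{\tau\le t}\Vert\Pi_N\dot E(\tau)\Vert .
\]
The natural constant is therefore $\sup\Vert\Pi_N\dot E\Vert$, not $\Vert E\Vert_\infty$.

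(b) Your bootstrap cannot rest on Lemma~\ref{lemma2}. A $\lim_{t\to\infty}$ bound says nothing about keeping $\hat E_{a,i}$ away from zero at each time of a finite horizon. You need a uniform-in-time estimate such as the one above, and you must handle its transient term as well. Moreover, requiring $\beta$ large adds a hypothesis the theorem does not state.
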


\begin{proof}
	Define estimation error and frequency modulation error as follow
	$$\Delta P_i=\hat P_{a,i}-P_a,$$ $$\Delta E_i=\hat E_{a,i}-E_a,$$ $$\Delta \omega_i=\omega_i-\omega^r.$$
	Denote some stack vectors $\Delta P = {[\Delta {P_1},\Delta {P_2}, \cdots ,\Delta {P_n}]^T}$, ${\hat P_a} = {[{\hat P_{a,1}},{\hat P_{a,2}}, \cdots ,{\hat P_{a,n}}]^T}$, ${\hat E_a} = {[{\hat E_{a,1}},{\hat E_{a,2}}, \cdots ,{\hat E_{a,n}}]^T}$, $\Delta \omega  = {[\Delta {\omega _1},\Delta {\omega _2}, \cdots ,\Delta {\omega _n}]^T}$, and $\omega  = {[{\omega _1},{\omega _2}, \cdots ,{\omega _n}]^T}$.
	\par Next, the convergence of the average load power estimator will be explained as follow.
	Construct the following candidate Lyapunov function
	$${V_P} = \frac{1}{2}\sum\limits_{i = 1}^n {\Delta P_i^2}.$$
	Derivative of ${V_P}$ with respect to time is formed as
	$$\begin{aligned}
		\dot V_P=&\sum\limits_{i = 1}^n \Delta {P_i} \Delta \dot P_i 
		=-\kappa\sum\limits_{i = 1}^n \Delta P_i \mathop \sum \limits_{j = 1}^n a_{ij}(\hat P_{a,i}- \hat P_{a,j})\\
		=&-\kappa\sum\limits_{i = 1}^n \Delta P_i \mathop \sum \limits_{j = 1}^n a_{ij}(\Delta P_i-\Delta P_j)
		=-\frac{\kappa}{2}\sum\limits_{i,j = 1}^n a_{ij}(\Delta P_i-\Delta P_j)^2< 0\textcolor{blue}{.}
	\end{aligned}$$
	So, $V_P$ converges to 0 asymptotically and $\mathop {\lim }\limits_{t \to \infty } {\Delta {P_i}} = 0$. That is,
	$\mathop {\lim }\limits_{t \to \infty } {\hat P_{i,a}} = {P_a},\forall i \in \{ 1,2, \cdots ,N\}.$
	\par Combined with \eqref{6a} and \eqref{6b}, it can be concluded that
	$$\dot {\hat E}_{a,i}=\dot E_i-\alpha q_i-\beta\sum\limits_{j \in N_i} a_{ij}(\hat E_{a,i}-\hat E_{a,j})-v_i.$$
	With the help of the mapping in Lemma \ref{lemma1} and \eqref{6c}, the following state-space model can be obtained,
	$$\left[ {\begin{array}{*{20}{c}}
			{\dot y}\\
			{\dot w}
	\end{array}} \right] = A\left[ {\begin{array}{*{20}{l}}
			y\\
			w
	\end{array}} \right]\textcolor{blue}{,}$$
	where $y = {\hat E_a} - {E_a}$, $w = v - \bar v$, $\bar v = {\Pi _N}(\dot E + \alpha E)$, $A = \left[ {\begin{array}{*{20}{c}}
			{ - \alpha {I_N} - \beta L}&{ - {I_N}}\\
			{\alpha \beta L}&0
	\end{array}} \right]$.
	Then, it can be inferred in terms of Lemma \ref{lemma1} that
	${y_i}(t) \to  - \frac{{{\alpha ^{ - 1}}}}{N}\mathop \sum \limits_{j = 1}^N {w_j}(0)$, ${w_i}(t) \to \frac{1}{N}\mathop \sum \limits_{j = 1}^N {w_j}(0)$, as $t \to \infty $, $\forall i \in \{ 1, \ldots ,N\} $.
	According to Lemma \ref{lemma2}, $\mathop {\lim }\limits_{t \to \infty } \left| {y_i} \right| \leq \frac{\gamma}{\beta \lambda_2}$, i,e., $\mathop {\lim }\limits_{t \to \infty } \left| {\hat E_{a,i}-E_a} \right| \leq \frac{\gamma}{\beta \lambda_2}$.
	\par Lastly, according to Lemma \ref{lemma21} and \eqref{7}, we have 
	$$\mathop {\lim }\limits_{t \to \infty } \left| {\omega^*_i-\hat P_{a,i}-\omega^r} \right|=\mathop {\lim }\limits_{t \to \infty } \left| {\omega_i-\omega^r} \right| \leq \frac{\gamma_2\Phi}{\kappa},$$
	where $\Phi=\mathop {max}\limits_{t<\infty}{\vert \dot {\hat P}_{a,i}\vert}$. From a different perspective, combined with \eqref{5e}, \eqref{6c} and \eqref{7}, the following derivation is obtained
	$$\begin{aligned}
		\Delta \dot \omega=&\dot \omega={{\dot {\omega }^*}} - {{\dot {\hat P}}_a}
		=- \kappa {\cal L}{\omega ^*} - \kappa {\cal B}({\omega ^*} - \hat P_a - {\omega ^r}) + \kappa {\cal L}{{\hat P}_a}\\
		=&- \kappa {\cal H}(\omega  - {\omega ^r})
		=- \kappa {\cal H}\Delta\omega\textcolor{blue}{.}
	\end{aligned}$$
	Then, $\Delta\omega$ can be obtained by integrating $\Delta \dot \omega $ with respect to time
	$\Delta \omega (t) = {e^{ - \kappa {\cal H}t}}(\omega(0)-\omega^r)$,
	where $\omega (0)$ is the synthesis vector of the initial value of each BESS's frequency. It can be deduced that the eigenvalues of ${\cal H}$ are non-zero with negative real part. Accordingly, $\mathop {\lim }\limits_{t \to \infty } \Delta \omega (t) = 0$, i.e., $\mathop {\lim }\limits_{t \to \infty } {\omega _i} = {\omega ^r},\forall i \in \{ 1,2, \cdots ,N\}.$
	Then, Problem 1 has been solved in asymptotic manner.
	$\hfill\blacksquare $
\end{proof}
\par So far, we complete the proof of Theorem 1.
\begin{remark}
	\label{rem1}
	The dynamic and steady-state performance of $\hat E_{a,i}$ are influenced by different parameters, respectively. Dynamic tracking performance is closely positively correlated with $\alpha$, that is, as $\alpha$ increases, the tracking rate becomes faster. The consensus steady-state performance, according to Lemma \ref{lemma2}, can be improved by selecting \textcolor{blue}{a} larger $\beta$ under a given communication graph. The consensus dynamic performance depends on $\alpha$ and $\beta$.
\end{remark}
\begin{remark}
	This solution is completely distributed, as each agent utilizes information from itself and its neighbors. 	In addition, the derivative of SoC, i.e. proportional power, is not required in real-time. 
	But this does not mean that there is no need to measure power. Occasionally, when switching between constant
	loads, the estimated power sharing state needs to be reset to the power at the time of load switching. 
	So, the complete dynamic of the system is a hybrid integrator with load-driven jumps, that is,
	$$\left\{\begin{array}{l}
		{\hat P}_{a,i}(t^+)=P_i,\, \mathrm{if}\,D\,\mathrm{changs\,at}\,t,\\
		\dot{\hat P}_{a,i}=-\kappa\sum\limits_{j \in N_i} a_{ij}({\hat P}_{a,i}-{\hat P}_{a,j}),\,\mathrm{otherwise}.
	\end{array}\right.$$
	The estimator below can serve as an alternative to this solution,
	$$\dot{\hat P}_{a,i}=-\kappa(\sum\limits_{j \in N_i} a_{ij}({\hat P}_{a,i}-{\hat P}_{a,j})+b_{ii}(\hat P_{a,i}-P_a)).$$
	At this point, simply update $P_a$ at the time of load switching.
\end{remark}
\subsection{Distributed finite-time control scheme}
\par In this section, we present the distributed finite-time scheme to solve Problem 1. In order to illustrate the subsequent design and related proof, we give the following lemmas.
\newtheorem{Definition}{Definition}
\begin{lemma}\label{lemma4}
	\cite{Nguyen2021a} If ${y_1},{y_2}, \cdots ,{y_n} \ge 0$ and $0 < r < p$, then
	$${(\sum\limits_{i = 1}^n {y_i^p} )^{\frac{1}{p}}} \le {(\sum\limits_{i = 1}^n {y_i^r} )^{\frac{1}{r}}}$$.
\end{lemma}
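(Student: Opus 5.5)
We prove Lemma \ref{lemma4}, the monotonicity of the $\ell_p$-type quasi-norms in the exponent. The plan is to reduce, by homogeneity, to the case where the right-hand side equals one, and then compare the summands term by term. Lemma \ref{lemma1}--Lemma \ref{lemma21} are not needed; only elementary facts about powers of numbers in $[0,1]$ are used.

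First, dispose of the trivial case. If $y_1=\cdots=y_n=0$, both sides vanish and there is nothing to prove. Otherwise set
\[
S=\Big(\sum_{i=1}^n y_i^r\Big)^{\frac{1}{r}}>0,\qquad z_i=\frac{y_i}{S}\ge 0 .
\]
Both sides of the claimed inequality are positively homogeneous of degree one in $(y_1,\dots,y_n)$. Indeed, $\big(\sum_i (c y_i)^p\big)^{1/p}=c\big(\sum_i y_i^p\big)^{1/p}$ for $c>0$, and likewise with exponent $r$. Hence it suffices to show $\big(\sum_i z_i^p\big)^{1/p}\le 1$ under the normalisation $\sum_i z_i^r=1$.

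The key step is the termwise comparison. From $\sum_i z_i^r=1$ and $z_i\ge0$, each summand satisfies $z_i^r\le 1$, so $0\le z_i\le 1$. Since $p>r>0$, we have $z_i^p=z_i^r\,z_i^{p-r}\le z_i^r$: for $z_i>0$ the factor $z_i^{p-r}$ lies in $(0,1]$, and for $z_i=0$ both sides are zero. Summing over $i$ gives
\[
\sum_{i=1}^n z_i^p\le\sum_{i=1}^n z_i^r=1 .
\]
Because $t\mapsto t^{1/p}$ is increasing on $[0,\infty)$, it follows that $\big(\sum_i z_i^p\big)^{1/p}\le 1$. Multiplying by $S$ and using homogeneity yields $\big(\sum_i y_i^p\big)^{1/p}\le S=\big(\sum_i y_i^r\big)^{1/r}$, which is the claim.

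There is no real obstacle here. The only points needing care are the degenerate case in which all $y_i=0$ (so the normalisation by $S$ is not possible) and the handling of zero entries when writing $z_i^p=z_i^r z_i^{p-r}$; both are covered above. Equality holds iff at most one $y_i$ is nonzero, since a strict inequality $z_i^p<z_i^r$ occurs exactly when $0<z_i<1$.
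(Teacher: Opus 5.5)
Your proof is correct and complete. Normalising so that $\sum_i z_i^r=1$ forces $0\le z_i\le 1$, hence $z_i^p=z_i^r z_i^{p-r}\le z_i^r$, and homogeneity recovers the general case. You handle the all-zero case and the zero entries properly, and your equality characterisation (at most one nonzero $y_i$) is also right.

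The paper gives no proof of Lemma~\ref{lemma4}; it simply cites \cite{Nguyen2021a}, where this is the classical Hardy--Littlewood--P\'olya inequality. Your normalisation argument is the standard self-contained proof that the citation stands in for.

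Raising both sides to the power $r$ and setting $a_i=y_i^p$, $s=r/p\in(0,1)$ turns your statement into the subadditivity $(\sum_i a_i)^s\le\sum_i a_i^s$. That is the form the paper actually uses in the proof of Theorem 2, with $r=\eta+1<2=p$.
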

\begin{lemma}\label{lemma5}
	\cite{Zhang2012} Denote $x$ a $n$-dimensional real column vector. Then we have $x^T{\cal L}x \ge {\lambda_2}{x^T}x$, ${x^T}{\cal H}x \ge {\lambda_m}{x^T}x$, and $\Vert x\Vert_1\ge\Vert x\Vert_2$.
\end{lemma}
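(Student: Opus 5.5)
The plan is to treat the three inequalities separately. The two quadratic-form bounds come from the spectral theorem for real symmetric matrices, that is, the Rayleigh quotient. The norm inequality follows from Lemma \ref{lemma4}.

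For the matrix inequalities, I would first use Assumption \ref{assum1}: since $\cal G$ is undirected, ${\cal L}$ is real symmetric, and so is ${\cal H}={\cal L}+{\cal B}$ because ${\cal B}$ is diagonal. Each therefore has an orthonormal eigenbasis. For ${\cal H}$, write $x=\sum_k c_k\xi_k$ in an orthonormal eigenbasis $\{\xi_k\}$ of ${\cal H}$. Then
$$x^T{\cal H}x=\sum_k\lambda_k({\cal H})c_k^2\ge\lambda_m\sum_k c_k^2=\lambda_m x^Tx.$$
By Lemma \ref{lemma3}, $\lambda_m>0$, so this bound is meaningful.

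For ${\cal L}$, I take an orthonormal eigenbasis with $\xi_1=\frac{1}{\sqrt n}1_n$, which corresponds to $\lambda_1({\cal L})=0$ since ${\cal L}1_n=0$. The same expansion gives
$$x^T{\cal L}x=\sum_{k\ge2}\lambda_k({\cal L})c_k^2\ge\lambda_2\sum_{k\ge 2}c_k^2 .$$
This equals $\lambda_2x^Tx$ only when $c_1=0$, that is, when $1_n^Tx=0$.

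This is the one delicate point. As literally stated, $x^T{\cal L}x\ge\lambda_2x^Tx$ fails for $x=1_n$, where the left side is $0$ and the right side is $\lambda_2 n>0$. I would therefore state and prove it for $x$ orthogonal to $1_n$, which is the standard form of the result in \cite{Zhang2012}. Equivalently, for arbitrary $x$ I would prove $x^T{\cal L}x\ge\lambda_2\,x^T\Pi_N x$, with $\Pi_N$ the projector defined in Lemma \ref{lemma1}. In the later finite-time analysis I would check that the bound is only ever applied to disagreement vectors, which satisfy this orthogonality.

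For $\Vert x\Vert_1\ge\Vert x\Vert_2$, I apply Lemma \ref{lemma4} with $y_i=|x_i|\ge0$, $r=1$ and $p=2$. This gives $\left(\sum_i|x_i|^2\right)^{1/2}\le\sum_i|x_i|$ directly. A direct check also works:
$$\Vert x\Vert_1^2=\sum_i|x_i|^2+\sum_{i\ne j}|x_i||x_j|\ge\Vert x\Vert_2^2 .$$

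The main obstacle is therefore not computational. It is getting the hypothesis of the ${\cal L}$-inequality right, namely the restriction to $1_n^\perp$.
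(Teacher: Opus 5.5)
Your proof is correct. The paper gives no argument of its own for this lemma; it imports it from \cite{Zhang2012}. Your expansion in an orthonormal eigenbasis (the Rayleigh-quotient argument), together with Lemma~\ref{lemma4} at $r=1$, $p=2$ for the norm comparison, is the standard proof behind that citation.

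You are also right that the first inequality is false as literally stated. Under Assumption~\ref{assum1} we have $\lambda_2>0$, and $x=1_n$ gives $0\ge\lambda_2 n$. The hypothesis $1_n^Tx=0$ is therefore genuinely needed. Equivalently, one can use the form $x^T{\cal L}x\ge\lambda_2x^T\Pi_Nx$, which follows from ${\cal L}\Pi_N=\Pi_N{\cal L}={\cal L}$.

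The downstream check you propose does succeed.
\begin{itemize}
\item In \eqref{13}, \eqref{8a} gives $1_n^T\hat E_a=1_n^T{\cal L}q+1_n^TE=1_n^TE$, so $1_n^T\Delta E=0$ for all $t$. The bound actually used there is $\Delta E^T{\cal L}^2\Delta E\ge\lambda_2^2\Delta E^T\Delta E$, which your eigenbasis expansion also yields on $1_n^\perp$.
\item In the $V_P$ estimate, $1_n^T\hat P_a$ is conserved by \eqref{9a}, because $a_{ij}=a_{ji}$ and $z\mapsto\mathrm{sign}(z)|z|^\eta$ is odd. Hence $1_n^T\Delta P=0$, provided $\hat P_{a,i}(0)=P_i$ and $P_a$ is the constant average of the $P_i$ between load switches. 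Note that the paper states this initialization only for the asymptotic scheme.
\item Applying the ${\cal H}$-inequality where the paper invokes this lemma in \eqref{11} gives $2V_\omega=\delta\omega^T{\cal H}^{-1}\delta\omega\le\lambda_m^{-1}\Vert\delta\omega\Vert_2^2$. So the constant there and in \eqref{12} should be $\lambda_m$; the paper's $\lambda_M$ gives the inequality in the wrong direction.
\end{itemize}
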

\begin{lemma}\label{lemma6}
	\cite{Forti2006} Consider the following system dynamic as $\dot x = u$. If the Lyapunov function $V$ with respect to the state $x$ satisfies $\dot V \le  - K{V^\alpha }$, where $K > 0$ and $0 < \alpha  < 1$. Then $V$ will converges to 0 in finite time $T=\frac{V^{1-\alpha}(0)}{K(1-\alpha)}$. At the same time, the state $x$ reaches stable.	
\end{lemma}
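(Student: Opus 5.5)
We prove the statement of Lemma \ref{lemma6} by a direct comparison argument on the scalar function $V(t)=V(x(t))$. Along trajectories of $\dot x=u$, $V$ is nonnegative, continuous in $t$ and satisfies $\dot V\le -KV^{\alpha}$ with $K>0$ and $0<\alpha<1$. If $V(0)=0$, then $\dot V\le 0$ together with $V\ge 0$ forces $V\equiv 0$, and the claim holds with $T=0$. We therefore assume $V(0)>0$ and let
$$t^\ast=\inf\{t\ge 0: V(t)=0\}\le\infty .$$

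The key step is the substitution $W=V^{1-\alpha}$. On $[0,t^\ast)$ we have $V>0$, so $W$ is differentiable there and
$$\dot W=(1-\alpha)V^{-\alpha}\dot V\le -(1-\alpha)K .$$
Integrating from $0$ to $t$ gives
$$V^{1-\alpha}(t)\le V^{1-\alpha}(0)-K(1-\alpha)t, \qquad t\in[0,t^\ast).$$
The left-hand side is positive on this interval, so the right-hand side must be positive as well. Hence $t^\ast\le T=\frac{V^{1-\alpha}(0)}{K(1-\alpha)}$; in particular $t^\ast$ is finite.

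Next we show that $V$ remains at zero after $t^\ast$. For $t\ge t^\ast$, the inequality $\dot V\le -KV^\alpha\le 0$ means $V$ is nonincreasing. Since $V\ge0$ and $V(t^\ast)=0$, we get $V(t)=0$ for all $t\ge t^\ast$, and so $V(t)=0$ for every $t\ge T$. This is the convergence time stated in the lemma; the equality $t^\ast=T$ holds exactly when $\dot V=-KV^\alpha$.

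Finally, for the claim that the state $x$ reaches stability, we use that $V$ is a Lyapunov function: positive definite with respect to the equilibrium set. Then $V(t)=0$ for $t\ge T$ implies $x(t)$ lies in that set for $t\ge T$.

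The main obstacle is regularity. The map $V^{1-\alpha}$ is not differentiable at $V=0$, and $\dot V$ may exist only almost everywhere, as in the nonsmooth setting of \cite{Forti2006}. We avoid dividing by $V$ by working only on $[0,t^\ast)$. If $V$ is merely absolutely continuous, the chain rule for $W$ still holds almost everywhere on that interval, and the integration step remains valid.
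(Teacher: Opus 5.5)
The paper gives no proof of this lemma and simply imports it from \cite{Forti2006}. Your argument is the standard proof of that cited result and is correct: you pass to $W=V^{1-\alpha}$, integrate $\dot W\le -(1-\alpha)K$ on $[0,t^\ast)$, and use that $V$ is nonincreasing after $t^\ast$. You also rightly read $T=\frac{V^{1-\alpha}(0)}{K(1-\alpha)}$ as an upper bound on the settling time, attained only when $\dot V=-KV^{\alpha}$, which is the precise content of the lemma's loose phrasing.
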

\par \textcolor{blue}{A} finite time estimator of SoC balance of each battery can be designed as \eqref{8a} and \eqref{8b}
\begin{subequations}
	\label{8}
	\begin{equation}
		\label{8a}
		{\hat E_{a,i}} = \sum\limits_{j=1}^n {{a_{ij}}({q_i} - {q_j})}  + {E_i},
	\end{equation}
	\begin{equation}
		\label{8b}
		\dot q_i=-\alpha sign(\sum\limits_{j=1}^n {{a_{ij}}({{\hat E}_{a,i}} - {{\hat E}_{a,j}})} ),
	\end{equation}
\end{subequations}
where ${\hat E_{a,i}}$ is the estimation of the average of SoC of battery $i$, ${q_i}$ is \textcolor{blue}{a} internal state, $\alpha  > 0$ is a designed parameter, $sign( \bullet )$ represents the symbolic function. In addition, the power sharing estimator and the nominal frequency controller are designed as \eqref{9a} and \eqref{9b},
\begin{subequations}
	\label{9}
	\begin{equation}
		\label{9a}
		\dot {\hat P}_{a,i}=-\beta_1 \mathop\sum\limits_{j=1}^n a_{ij}sign(\hat P_{a,i}-\hat P_{a,j})\vert\hat P_{a,i}-\hat P_{a,j}\vert^\eta,
	\end{equation}
	\begin{equation}
		\label{9b}
		\dot \omega_i^*=-\beta_2 sign(\sum\limits_{j=1}^n a_{ij} (\omega_i^*-\omega_j^*)+b_{ii}(\omega_i^*-\hat P_{a,i}-\omega ^r)).
	\end{equation}
\end{subequations}
\par With these above lemmas in hand, the following result can be easily obtained based on the designed estimators.
\begin{theorem}
	Consider a power network with $n$ BESSs subjecting to the dynamics \eqref{5}. Let the communication topology satisfy Assumption \ref{assum1}. Then, if $\alpha>\frac{{\sqrt n {P_\Sigma }}}{{3600{\lambda _2}}}$, $\beta_1>0$, $\beta_2>\phi$ and $0\le\eta< 1$, where $\phi=\Vert \dot{\hat P}(0)\Vert_\infty$, Problem 1 will be solved with the distributed control algorithm \eqref{8} and \eqref{9} in finite time. That is, active power sharing and SoC balance states can be estimated at $t=T_P$ and $t=T_E$ respectively. Besides, frequency restores to $\omega^r$ at $t =T_\omega $, where the nominal frequency of each BESS converge to  $({\omega^r+P_a})$ at $t=T_\omega$.
\end{theorem}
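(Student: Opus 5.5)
The proof splits into three finite-time arguments, one for each estimator or controller: the power sharing estimator \eqref{9a}, the SoC estimator \eqref{8}, and the nominal frequency controller \eqref{9b}. In each case I build a quadratic Lyapunov function and show it satisfies an inequality of the form $\dot V\le -KV^{\alpha}$ with $0<\alpha<1$. Lemma \ref{lemma6} then gives an explicit settling time, namely $T_P$, $T_E$ and $T_\omega$ respectively.

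\textbf{Power sharing estimator \eqref{9a}.}
\begin{itemize}
\item Set $\Delta P_i=\hat P_{a,i}-P_a$ and $V_P=\frac12\sum_i\Delta P_i^2$. For a fixed load, $P_a$ is constant.
\item The graph is undirected, so $\sum_i\dot{\hat P}_{a,i}=0$ by antisymmetry of the summand. Hence the average of $\hat P_{a,i}$ is preserved, and it equals $P_a$ by the initialization $\hat P_{a,i}(0)=P_i$.
\item Symmetrizing gives
$$\dot V_P=-\frac{\beta_1}{2}\sum_{i,j}a_{ij}\vert\Delta P_i-\Delta P_j\vert^{1+\eta}.$$
\item Since $0<\frac{1+\eta}{2}\le 1$, Lemma \ref{lemma4} bounds this below by $\big(\sum_{i,j}a_{ij}(\Delta P_i-\Delta P_j)^2\big)^{\frac{1+\eta}{2}}=(2\Delta P^T{\cal L}\Delta P)^{\frac{1+\eta}{2}}$.
\item Lemma \ref{lemma5} applies because $\Delta P\perp 1_n$, giving $\Delta P^T{\cal L}\Delta P\ge\lambda_2\Vert\Delta P\Vert^2=2\lambda_2V_P$.
\item Therefore $\dot V_P\le-\frac{\beta_1}{2}(4\lambda_2 V_P)^{\frac{1+\eta}{2}}$. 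Lemma \ref{lemma6} gives $T_P=V_P(0)^{\frac{1-\eta}{2}}\big/\big(\frac{\beta_1}{2}(4\lambda_2)^{\frac{1+\eta}{2}}\frac{1-\eta}{2}\big)$.
\item The case $\eta=0$ is the sign-consensus case. It is handled the same way, using $\Vert\cdot\Vert_1\ge\Vert\cdot\Vert_2$ from Lemma \ref{lemma5}.
\end{itemize}

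\textbf{SoC estimator \eqref{8}.}
\begin{itemize}
\item By \eqref{8a}, $\sum_i\hat E_{a,i}=\sum_iE_i$ at all times because ${\cal L}$ is symmetric, so the estimates always have the correct mean $E_a$. It suffices to drive $z={\cal L}\hat E_a$ to zero.
\item In vector form, $\hat E_a={\cal L}q+E$, hence $\dot{\hat E}_a=-\alpha{\cal L}\,\mathrm{sign}({\cal L}\hat E_a)+\dot E$ with $\dot E=-P/3600$.
\item Take $V_E=\frac12\hat E_a^T{\cal L}\hat E_a$. Then
$$\dot V_E=-\alpha z^T{\cal L}\,\mathrm{sign}(z)+z^T\dot E .$$
\item The main obstacle is getting a negative bound on the first term, since $z^T{\cal L}\,\mathrm{sign}(z)$ is not directly a norm of $z$. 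My first try is a different Lyapunov function, $V=\Vert z\Vert_1$ or $\frac12 z^Tz$ with $z={\cal L}\hat E_a$, working with the dynamics of $z$ directly. Alternatively one can use $\hat E_a-E_a1\perp 1$ and $\lambda_2$ to obtain something like $\dot V_E\le-(\alpha\lambda_2-\frac{\sqrt n P_\Sigma}{3600})\Vert z\Vert$.
\item The perturbation is bounded by $\Vert\dot E\Vert\le\sqrt n P_\Sigma/3600$, where $P_\Sigma$ is a bound on the powers. This is exactly where the hypothesis $\alpha>\frac{\sqrt nP_\Sigma}{3600\lambda_2}$ enters.
\item Relating $\Vert z\Vert$ to $V_E^{1/2}$ through $\lambda_M$ gives $\dot V_E\le -cV_E^{1/2}$. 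Lemma \ref{lemma6} then yields $T_E$.
\item I expect the constants in this step to need adjusting, possibly with the bound stated conservatively.
\end{itemize}

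\textbf{Frequency controller \eqref{9b}.}
\begin{itemize}
\item Let $s={\cal L}\omega^*+{\cal B}(\omega^*-\hat P_a-\omega^r1)$ and $\xi=\omega^*-\hat P_a-\omega^r1$. Since ${\cal L}1=0$, $s={\cal H}\xi$, and $\dot\xi=-\beta_2\,\mathrm{sign}(s)-\dot{\hat P}_a$.
\item Use $V_\omega=\frac12\xi^T{\cal H}\xi$. Then
$$\dot V_\omega=s^T\dot\xi\le-\beta_2\Vert s\Vert_1+\phi\Vert s\Vert_1\le-(\beta_2-\phi)\Vert s\Vert_2 .$$
This uses $\Vert\dot{\hat P}_a(t)\Vert_\infty\le\phi$, which holds because the dynamics \eqref{9a} are nonexpansive in the $\infty$-norm, so $\Vert\dot{\hat P}_a\Vert_\infty$ does not exceed its initial value. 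That monotonicity should be checked, or simply assumed as the meaning of $\phi$.
\item From $\Vert s\Vert_2^2=\xi^T{\cal H}^2\xi\ge2\lambda_mV_\omega$ (Lemma \ref{lemma5}) we get $\dot V_\omega\le-(\beta_2-\phi)\sqrt{2\lambda_m}V_\omega^{1/2}$. Lemma \ref{lemma6} gives $T_\omega$.
\item For $t\ge T_\omega$ we have $\xi=0$, i.e. $\omega_i=\omega_i^*-\hat P_{a,i}=\omega^r$. Once $t\ge T_P$ as well, $\hat P_{a,i}=P_a$, so $\omega_i^*\to\omega^r+P_a$, as claimed.
\end{itemize}
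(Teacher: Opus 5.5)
Your power-sharing and frequency steps follow the paper's proof. Your $V_\omega=\frac12\xi^T{\cal H}\xi$ is exactly the paper's $\frac12\delta\omega^T{\cal H}^{-1}\delta\omega$, and your constant $\sqrt{2\lambda_m}$ is the correct one; the paper's $\lambda_M$ there should read $\lambda_m$.

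The SoC step, however, has a genuine gap: the Lyapunov function $V_E=\frac12\hat E_a^T{\cal L}\hat E_a$ cannot close. Its control term $-\alpha z^T{\cal L}\,\mathrm{sign}(z)$ is nonpositive, but it admits no bound of the form $-c\Vert z\Vert$.

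Take the path graph on nodes $1,2,3,4$ and $z=(1,\varepsilon,-\varepsilon,-1)^T$, which lies in the range of ${\cal L}$. Then ${\cal L}\,\mathrm{sign}(z)=(0,2,-2,0)^T$, so $z^T{\cal L}\,\mathrm{sign}(z)=4\varepsilon$ while $\Vert z\Vert_2>\sqrt2$. Hence for every $\alpha$ there are states where the perturbation $z^T\dot E$, which is of order $\Vert z\Vert$, dominates and $\dot V_E>0$.

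None of the repairs you list fixes this:
\begin{itemize}
\item The bound $\dot V_E\le-(\alpha\lambda_2-\frac{\sqrt nP_\Sigma}{3600})\Vert z\Vert$ would need $z^T{\cal L}\,\mathrm{sign}(z)\ge\lambda_2\Vert z\Vert$, which the example refutes.
\item With $\frac12 z^Tz$ the control term becomes $-\alpha z^T{\cal L}^2\mathrm{sign}(z)$. In the same example ${\cal L}^2\mathrm{sign}(z)=(-2,6,-6,2)^T$, so this term equals $\alpha(4-12\varepsilon)>0$, i.e. it has the wrong sign.
\item With $\Vert z\Vert_1$ you would need a Filippov analysis on the sliding set $\{z_i=0\}$. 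Even then you would get a gain condition involving $\min\Vert{\cal L}s\Vert$ over sign patterns $s$, not the stated $\alpha>\frac{\sqrt nP_\Sigma}{3600\lambda_2}$.
\end{itemize}

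The missing idea is to use the unweighted error $V=\frac12\Vert\Delta E\Vert^2$ with $\Delta E=\hat E_a-E_a1_n$, as the paper does. The sign input enters through ${\cal L}$, and ${\cal L}\Delta E={\cal L}\hat E_a=\mu$, so the Laplacian moves onto $\Delta E$. The control term then becomes exactly $-\alpha\mu^T\mathrm{sign}(\mu)=-\alpha\Vert\mu\Vert_1$.

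Your mean-preservation observation gives $\Delta E\perp 1_n$, hence $\Delta E={\cal L}^+\mu$. The perturbation is then $\mu^T{\cal L}^+\dot E\le\frac{\sqrt n}{\lambda_2}\Vert\mu\Vert_1\Vert\dot E\Vert_\infty$, which is precisely where the $\lambda_2$ in the hypothesis comes from. Finally, $\Vert\mu\Vert_1\ge\Vert\mu\Vert_2\ge\lambda_2\Vert\Delta E\Vert$ yields $\dot V\le-(\alpha\lambda_2-\frac{\sqrt nP_\Sigma}{3600})\sqrt{2V}$, and Lemma \ref{lemma6} gives $T_E$.

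A smaller point: your justification of $\Vert\dot{\hat P}_a(t)\Vert_\infty\le\phi$ conflates nonexpansiveness of $\hat P_a$ with that of its derivative. What is actually needed is that $\dot{\hat P}_a$ itself obeys a consensus-type equation with nonnegative symmetric weights $\beta_1\eta a_{ij}\vert\hat P_{a,i}-\hat P_{a,j}\vert^{\eta-1}$. This holds for $0<\eta<1$ between collisions but is unclear for $\eta=0$. The paper only asserts this bound, so it is a shared weak point rather than your gap.
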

\begin{proof}
	Firstly, the convergence of the average load power estimator is explained under the condition of $0 \le \eta  < 1$ and $\beta  > 0$. Select the following Lyapunov function
	$$V_P=\Delta P^T\Delta P.$$
	Find its derivative as
	$$\begin{aligned}
		\dot V_P=& 2\sum\limits_{i = 1}^N \Delta P_i\Delta \dot P_i
		=2\sum\limits_{i = 1}^N \Delta P_i (-\beta_1 \mathop\sum\limits_{j=1}^n a_{ij}sign(\hat P_{a,i}-\hat P_{a,j})\vert\hat P_{a,i}-\hat P_{a,j}\vert^\eta) \\
		=&-\beta_1\sum\limits_{i,j = 1}^N (\Delta P_i-\Delta P_j)sign(\Delta P_i-\Delta P_j)\vert\Delta P_i-\Delta P_j\vert^\eta \\
		=&-\beta_1\sum\limits_{i,j = 1}^N a_{ij}\vert \Delta P_i-\Delta P_j\vert^{\eta+1}.
	\end{aligned}$$
	According to Lemma \ref{lemma4} and Lemma \ref{lemma5}, $\dot V$ follows the following derivation
	$$\begin{aligned}
		\dot V_P=&-\beta_1\sum\limits_{i,j = 1}^N a_{ij}\vert \Delta P_i-\Delta P_j\vert^{\eta+ 1}
		\le -\beta_1(\sum\limits_{i,j = 1}^n a_{ij}^\frac{2}{\eta + 1}\vert\Delta P_i-\Delta P_j\vert^2)^\frac{1}{2}\\
		=&-\beta_1[2\Delta P^T{\cal L}\Delta P]^\frac{\eta+1}{2}
		\le-\beta_1(2\lambda_2)^\frac{\eta+ 1}{2}[V_P]^\frac{\eta+ 1}{2}.
	\end{aligned}$$
	By Lemma \ref{lemma6}, $\mathop {\lim }\limits_{t \to {T_P}} {V_P} = 0$, where 
	\begin{equation}
		\label{10}
		T_P=\frac{\Vert \Delta P(0)\Vert^{1-\eta}}{(1-\eta)\beta_1(2\lambda_2)^\frac{\eta+1}{2}}=\frac{(\Vert \Delta P(0)\Vert\sqrt{2\lambda_2})^{1-\eta}}{2\lambda_2\beta_1(1-\eta)}.
	\end{equation}
	That is $\mathop {\lim }\limits_{t \to {T_P}} \left| {{{\hat P}_{a,i}} - {P_a}} \right| = 0$ and ${\hat P_{a,i}} = {P_a}$ when $t \ge {T_P}$ for $\forall i \in \{ 1,2, \cdots ,N\} $. 
	\par Rewrite \eqref{9a} in a matrix form as
	$$\dot\omega^*=-\beta_2 sign({\cal H}(\omega^*-\hat P_a-\omega ^r)).$$
	Define a new variable $\delta\omega={\cal H}(\omega^*-\hat P_a-\omega ^r)$. Select a candidate Lyapunov function as
	$$V_\omega=\frac{1}{2}\delta\omega^T {\cal H}^{-1}\delta\omega,$$
	whose derivative is, according to Lemma \ref{lemma5}
	\begin{equation}
		\label{11}
		\begin{aligned}
			\dot V_\omega=&\delta\omega^T{\cal H}^{-1}{\cal H}(\dot\omega^*-\dot{\hat P}_a)
			\le \delta\omega^T\dot\omega^*-\delta\omega^T\dot{\hat P}_a
			=-\beta_2\delta\omega^Tsign(\delta\omega)-\delta\omega^T\dot{\hat P}_a\\
			=&-\beta_2\sum\limits_{i = 1}^n\vert\delta\omega_i\vert-\delta\omega^T\dot{\hat P}_a
			\le -\beta_2\Vert\delta\omega\Vert_1+\phi\Vert\delta\omega\Vert_1\\
			\le&-(\beta_2-\phi)(2\lambda_M)^{\frac{1}{2}} V_\omega^{\frac{1}{2}},
		\end{aligned}
	\end{equation}
	which implies, according to Lemma \ref{lemma6}, if $\beta_2>\phi$, the nominal frequency $\omega _i^*$ converges to $(\hat P_{a,i}+\omega ^r)$ in finite time 
	\begin{equation}
		\label{12}
		T_\omega=\frac{\sqrt {2V_\omega(0)}}{(\beta_2-\phi)\sqrt{\lambda_M}}
	\end{equation}
	for $\forall i \in \{ 1,2, \cdots ,N\} $. Thus, under the action of droop controller, ${\omega _i} = {\omega ^r}$  when $t\ge T_\omega$.
	\par Next, the finite time convergence of the SoC balance estimator will be proved.
	\par By constructing a matrix form of \eqref{8a} and \eqref{8b}, one can get
	$${\dot {\hat E}_a} = {\cal L}{u^q} + \dot E,$$
	$$\dot q = {u^q},$$
	$${u^q} =  - \alpha sign(\mu),$$
	$$\mu  = {\cal L}{\hat E_a}.$$
	As for the estimation error $\Delta E$ of SoC, the following two equations hold true,
	$$\begin{array}{l}
		\Delta \dot E = {{\dot {\hat E}}_a} - \frac{1}{N}{1_N}1_N^T\dot E
		= {\cal L}{u^q} + ({I_N} - \frac{1}{N}{1_N}1_N^T)\dot E
	\end{array},$$
	$${\cal L}\Delta E = {\cal L}({\hat E_a} - {E_a}) = {\cal L}{\hat E_a}.$$
	Denote ${\cal L}({\cal L}^+) = {I_N} - \frac{1}{N}{1_N}1_N^T$, where ${\cal L}^+$ is the generalized inverse of ${\cal L}$. Then, we arrive at
	$$({\cal L}^+)\mu  = ({\cal L}^+){\cal L}{\hat E_a} = {\hat E_a}.$$
	Select the Lyapunov function with respect to $\Delta E$ as
	$$V = \frac{1}{2}\Delta {E^T}\Delta E.$$
	Because $V$ is a well-defined continuous function with continuous first derivative function, the first derivative function can be derived as
	$$\begin{array}{l}
		\dot V = \Delta {E^T}\Delta \dot E
		= \Delta {E^T}({{\dot {\hat E}_a}} - {{\dot E}_a})
		= \Delta {E^T}({\cal L}{u^q} +{\cal L}({\cal L}^+)\dot E)
		= {\mu ^T}{u^q} + {\mu ^T}({\cal L}^+)\dot E.
	\end{array}$$
	The second term on the right side of the above equation satisfies the following inequality,
	$${\mu ^T}({\cal L}^+)\dot E \le {\sigma _{\max }}({\cal L}^+)\sqrt n \parallel \mu {\parallel _1}\parallel \dot E{\parallel _\infty },$$
	where ${\sigma _{\max }}({\cal L}^+)$ denotes the maximum singular value and ${\sigma _{\max }}({\cal L}^+) = \frac{1}{{{\lambda _2}}}$. Also,
	$${\mu ^T}{u^q} =  - \alpha  \mathop \sum \limits_{i = 1}^n\left| {{\mu _i}} \right| =  - \alpha \parallel {\mu}{\parallel _1}.$$
	As a result, $\dot V$ satisfies the following derivation
	$$\begin{aligned}
		\dot V \le& \frac{1}{{{\lambda _2}}}\sqrt n \parallel \mu {\parallel _1}\parallel \dot E{\parallel _\infty } - \alpha \parallel {\mu}{\parallel _1}
		= (\frac{1}{{{\lambda _2}}}\sqrt n {\left| {\dot E} \right|_{\max }} - \alpha )\parallel {\mu}{\parallel _1}\\
		\le& (\frac{1}{{{\lambda _2}}}\sqrt n \frac{1}{{3600}}{\left| P \right|_{\max }} - \alpha )\parallel {\mu}{\parallel _1}.
	\end{aligned}$$
	In general, ${P_i} \le {P_\Sigma }$. If $\alpha  > \frac{{\sqrt n {P_\Sigma }}}{{3600{\lambda_2}}}$, according to Lemma \ref{lemma5}, we arrive at
	\begin{equation}
		\label{13}
		\begin{aligned}
			\dot V \le& (\frac{\sqrt {n} P_\Sigma}{3600\lambda_2}-\alpha)\Vert\mu\Vert_1
			\le (\frac{\sqrt {n} P_\Sigma}{3600\lambda_2}-\alpha)\Vert\mu\Vert_2
			=(\frac{\sqrt {n} P_\Sigma}{3600\lambda_2}-\alpha)\sqrt{\Delta {E^T}{L^2}\Delta E}\\
			\le &(\frac{\sqrt {n} P_\Sigma}{3600\lambda_2}-\alpha){\lambda_2}{V^{\frac{1}{2}}}.
		\end{aligned}
	\end{equation}
	According to Lemma \ref{lemma6}, SoC balance error converges to 0 within a finite time denoted by ${T_E}$ and
	\begin{equation}
		\label{14}
		T_E=\frac{2\Vert\Delta E(0)\Vert}{\alpha\lambda_2-\frac{\sqrt {n} P_\Sigma}{3600}}.
	\end{equation}
	\par So far, we have completed the proof of Theorem 2. $\hfill\blacksquare$
\end{proof}
\begin{remark}
	\label{rem3}
	From \eqref{10}, it can be seen that the convergence rate of $\Delta P$ can be accelerated by selecting a larger parameter $\beta_1$ \textcolor{blue}{(or $\beta_2$)} and a communication topology with a larger minimum non-zero eigenvalue, respectively. The role of parameter $\eta$ is not yet clear. This is because given different system initial values, communication topology, and parameters \textcolor{blue}{$\beta_1$, $\beta_2$}, the influence of parameter $\eta$ on convergence time is different. Therefore, this requires further analysis based on different situations, and inevitably requires the use of trial and error methods.
\end{remark}
\begin{remark}
	\label{rem4}
	From \eqref{12}, it can be seen that the convergence time of frequency decreases with the increase of the maximum eigenvalues and parameters of the communication topology, respectively. However, $\beta_2$ no longer affects the convergence time by selecting $\beta_2\ge\phi+1$. That is, \eqref{11} can be further deduced as $\dot V_\omega\le-(2\lambda_M)^\frac{1}{2}V_\omega^\frac{1}{2}$. Then, according to Lemma \ref{lemma6}, $T_\omega=\frac{\sqrt {2V_\omega(0)}}{\sqrt{\lambda_M}}$. Similarly, \eqref{14} shows that the convergence time decreases with the increase of $\alpha$ and $\lambda_2$. When $\alpha\lambda_2\ge\frac{\sqrt {n} P_\Sigma}{3600}+1$ (or $\alpha\ge\frac{\sqrt {n} P_\Sigma}{3600\lambda_2}+1$) is selected, \eqref{13} can be further deduced as $\dot V\le -V^\frac{1}{2}$ (or $\dot V\le -\lambda_2 V^\frac{1}{2}$), which leads to $T_E=2\Vert\Delta E(0)\Vert$ (or $T_E=\frac{2\Vert\Delta E(0)\Vert}{\lambda_2}$). This implies that by selecting $\alpha\lambda_2\ge\frac{\sqrt {n} P_\Sigma}{3600}+1$ or $\alpha\ge\frac{\sqrt {n} P_\Sigma}{3600\lambda_2}+1$, the convergence time $T_E$ is not affected by $\alpha$ and $\lambda_2$ or only by $\lambda_2$.
\end{remark}
\section{Some simulation cases}
\par In this section, several cases are designed to verify the two proposed schemes. The \textcolor{blue}{modified} IEEE-57 bus is selected as the load distribution exhibited in Figure \ref{fig2}\textcolor{yellow}{, where 7 BESSs are set on buses 1, 2, 3, 6, 8, and 9}. The communication topologies are illustrated in Figure \ref{fig3}. $Matpower$, which is used to calculate power flow in the simulation, is a $Matlab$ toolkit. $MatDyn$ is \textcolor{blue}{selected} as the Simulation toolkit based on $Matpower$. Here are six case studies designed to verify and test the performance of the two designed solutions.
\par These two schemes, under a fixed graph, are test in Case 1 and Case 4, respectively.
\par To test the effectiveness of two schemes under time-varying communication topologies, a switching sequence with an interval of 20 minutes, i.e., (a) $\to$ (b) $\to$ (c) $\to$ (d) in Figure \ref{fig3}, is executed in Case 2 and Case 5.
\par In Case 3 and Case 6, the influence of parameters on the results are studied trial and error.
\begin{figure}
	\centering
	\subfigure[The \textcolor{blue}{modified} IEEE-57 bus system\label{fig2}] {\includegraphics[width=7.5cm]{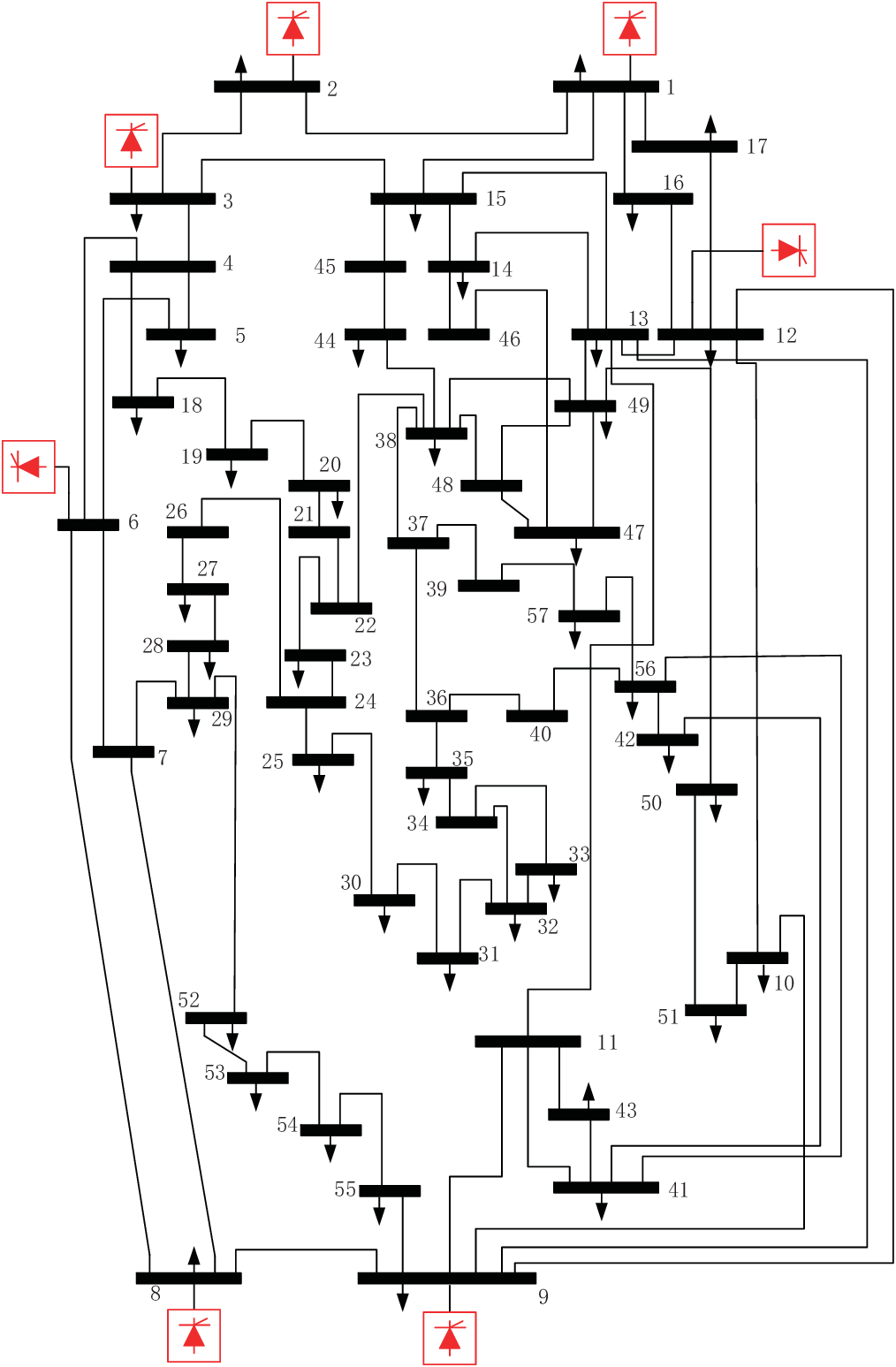}}\\
	\subfigure[The communication topology graph\label{fig3}] {\includegraphics[width=8.0cm]{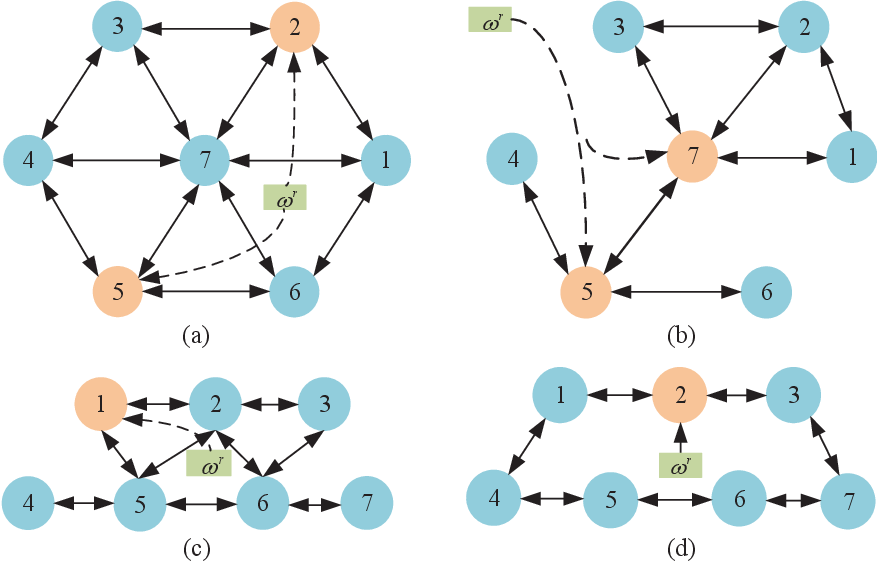}}\\
	\caption{The \textcolor{blue}{modified} IEEE-57 bus system and its communication topology}
	\label{figs}
\end{figure}
\subsection{Case 1. The Test on the Asymptotic Algorithm}
\par In order to verify the effectiveness of the strategy proposed in asymptotic manner, some operations are implemented during the simulation. The whole system works in discharging mode. The secondary controllers based on droop control work at $t = 10$. When $t = 40$, there is a total load decrease of $1.26\;p.u.$, which is evenly distributed at each load node. Besides, \textcolor{yellow}{$0.06\;p.u.$} increase at each load node occurs at $t = 75$.
\par Figure \ref{fig4} is the simulation result. Figure \ref{fig4} reveals that the estimation of SoC balance, active power sharing are reached, and the frequency restores to the reference. In addition, the SoC balance can be well maintained regardless of how the load changes and active power of each BESS outputs stably.
\par To illustrate the progressiveness of the scheme we designed, two existing distributed schemes are compared, one of which is fully driven \cite{8076899} and the other is under driven \cite{Nguyen2021a}. The simulation results are shown in Figure \ref{fig41}. Comparing Figure \ref{fig4}, it can be seen that previous schemes have performed poorly in addressing the secondary control and SoC balance of BESSs with capacity degradation. Especially in the underactuated scheme, it makes it difficult to achieve precise consensus on the SoC, proportional power, and frequency of each BESS, respectively. In contrast, our designed solution improves the traditional droop control approach of distributing power according to rated capacity, so that power is sharing according to the current SoC level. This enables all the SoCs of BESS to converge to 0 simultaneously, while ensuring accurate frequency restoration.
\begin{figure}
	\centering
	\includegraphics[width=8.0cm]  {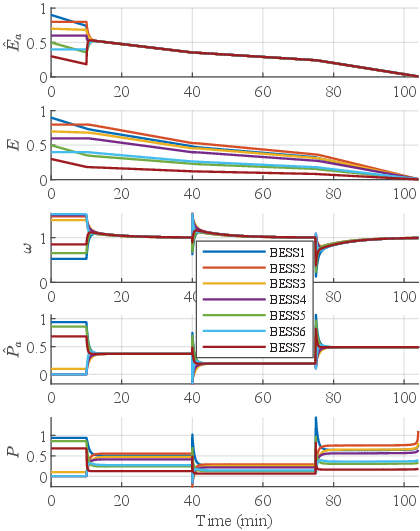} \caption{The simulation result of the asymptotic algorithm in Case 1.\label{fig4}}
\end{figure}
\begin{figure}
	\centering
	\subfigure[Simulation results under an underactuated scheme in \cite{Nguyen2021a}. \label{fig5c}] {\includegraphics[width=8.0cm]{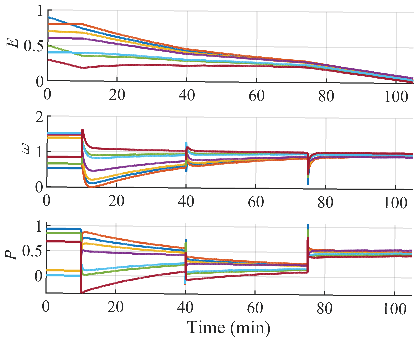}}\\
	\subfigure[Simulation results under a full drive scheme in \cite{8076899}.\label{fig5cc}] {\includegraphics[width=8.0cm]{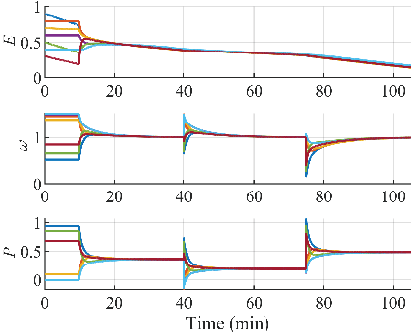}}\\
	\caption{Simulation results under previous schemes.}
	\label{fig41}
\end{figure}
\subsection{Case 2. The Test on the Asymptotic Algorithm with Time-Varying Communication Topologies}
\par In this case, the performance of the asymptotic scheme in dealing with switching topologies is studied. After the controller is activated at $t=10$, a total load decrease of $1.26\;p.u.$ and increase of $2.52\;p.u.$ occur at $t=35$ and $t=55$, respectively. According to the previously predetermined topology switching frequency and order, the relevant simulation results are shown in Figure \ref{fig5}.
\par From this result, it can be seen that each variable can still converge under time-varying topologies, but with slightly different rates. This is related to the current communication topology. This indicates the robustness of the designed scheme to switching topologies. In addition, as long as connectivity is maintained, even if some communication links fail, this algorithm remains effective, ensuring the stability and security of BESS.
\begin{figure}
	\centering
	\includegraphics[width=8.0cm]  {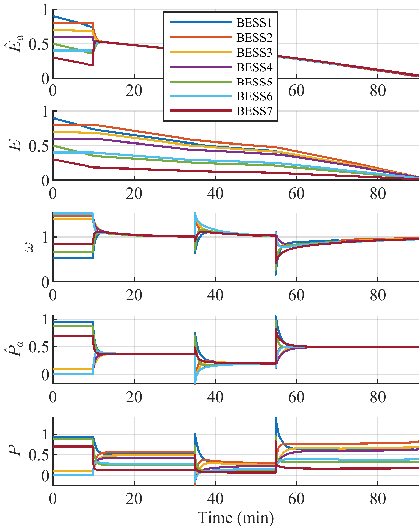} \caption{The simulation result of the asymptotic algorithm with time-varying communication topologies in Case 2.\label{fig5}}
\end{figure}
\subsection{Case 3. Influence of Parameters on the Asymptotic Algorithm}
\par Different parameters value groups of $\alpha$, $\beta$, and $\kappa$ in the distributed asymptotic controller are set in this case to grasp the influence of each parameter on the results and verify the correctness of the discussion in Remark \ref{rem1}. The simulation results of the real time SoC balance estimation, power sharing state estimation and frequency restoration are shown in Figure \ref{fig6} and \ref{fig7}.
\begin{figure}
	\centering
	\subfigure[Effect of different $\alpha$ and $\beta$ on $\hat E_{a,i}$ under the asymptotic algorithm in Case 3.\label{fig6}] {\includegraphics[width=8.0cm]{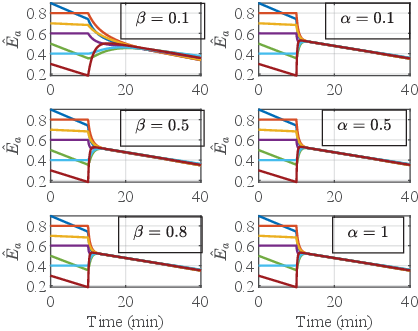}}\\
	\subfigure[Effect of different $\kappa$ on $\omega_i$ and $\hat P_{a,i}$ under the asymptotic algorithm in Case 3.\label{fig7}] {\includegraphics[width=8.0cm]{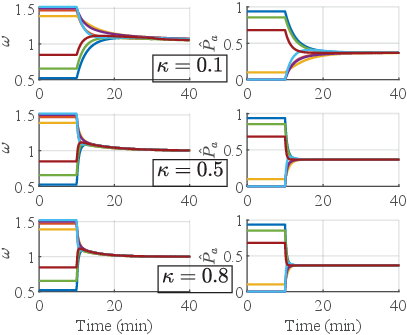}}\\
	\caption{The influence of parameters of asymptotic algorithms on simulation results.}
	\label{figc3}
\end{figure}
\par From the simulation results in Figure \ref{fig6}, different values of $\alpha$ have no obvious effect on the results. However, the real time SoC balance estimation is mainly affected by $\beta$. Specifically, slow speed of SoC balance estimation become more serious with a smaller $\beta$. For power sharing estimation and frequency restoration, selecting a larger $\kappa$ can accelerate the convergence rate. All of these indicate the correctness of the discussion in Remark \ref{rem1}.
\subsection{Case 4. The Test on the Finite Time Control Algorithm}
\par To test the effect of the proposed finite time control scheme, the same events and load changes as Case 1 are implemented in this case. Figure \ref{fig8} are the simulation results in the discharging mode. 
\par Compared with Case 1, although the estimation of the real time SoC balance and power sharing can be well estimated as described in Figure \ref{fig8}. Besides, frequency can be restored to the reference value. SoCs can ultimately reach 0 simultaneously and BESSs have stable power output. At this time, the high-frequency chatting behavior is not obvious.
\begin{figure}
	\centering
	\includegraphics[width=8cm]  {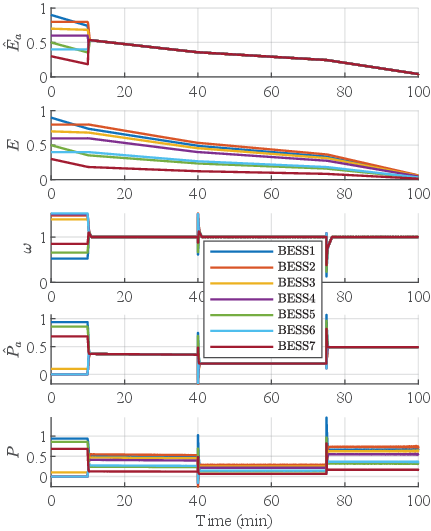} \caption{The simulation result of the finite time algorithm in Case 4.\label{fig8}}
\end{figure} 
\subsection{Case 5. The Test on the Finite-time Algorithm with Time-Varying Communication Topologies}
\par In this case, the performance of the finite-time scheme in dealing with switching topologies is studied. After the controller is activated at $t=10$, a total load decrease of $1.26\;p.u.$ and increase of $2.52\;p.u.$ occur at $t=35$ and $t=55$, respectively. According to the previously predetermined topology switching frequency and order, the relevant simulation results are shown in Figure \ref{fig9}.
\par From this result, it can be seen that each variable can still converge under time-varying topologies, and with no significantly changed rate. This indicates, similar to the conclusion in Case 3, this algorithm exhibits robustness to time-varying topologies and solves the problems proposed in this article while ensuring the connectivity of the communication topology.
\begin{figure}
	\centering
	\includegraphics[width=8cm]  {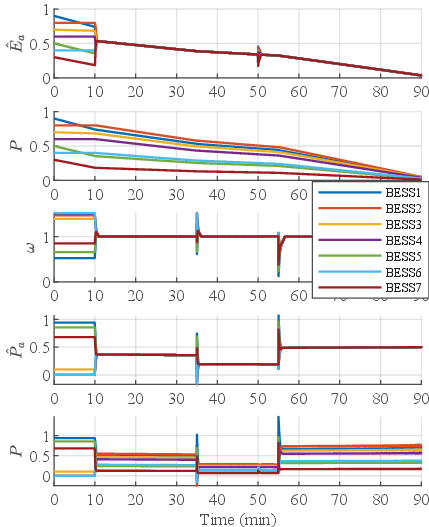} \caption{The simulation result of the finite-time algorithm with time-varying communication topologies in Case 5.\label{fig9}}
\end{figure}
\subsection{Case 6. Influence of Parameters on the Finite-Time Algorithm} 
\par To explore the impact of relevant parameters $\alpha$, $\beta$, and $\eta$ on the system behavior and validate the discussion in Remarks \ref{rem3} and \ref{rem4}, simulations are conducted using different values of each parameter based on Case 1. Corresponding to Figure \ref{fig3}, $\lambda_2=2$ and $\lambda_m=0.2412$. Besides, $\Vert \Delta P(0)\Vert=1.0671$, $P_{\Sigma}=2.575$, and $\phi=3.057\beta_1$. Hence, through a simple verification of \eqref{10} with $0\le\eta<1$, $T_P$ is negatively correlated with $\eta$. In addition, if selecting $\alpha\ge 1.0557$, according to Remark \ref{rem4}, $T_E$ remains unchanged. And according to Remark \ref{rem4}, $T_\omega$ remains unchanged by selecting $\beta_2\ge 3.057\beta_1+1$. The simulation results are shown in Figure \ref{fig10}-\ref{fig12}.
\par It can be deduced from Figure \ref{fig10} that with a larger $\beta_1$, consensus on $\hat P_{a,i}$ is accelerated. However, the conclusion about $\eta$ is exactly the opposite. From the simulation results of $\hat E_{a,i}$, it can be seen that the increase of $\alpha$ accelerates consensus rate, which is, however by selecting $\alpha>1.0557$, no longer changes. On the contrary, the shaking phenomenon becomes more pronounced. For $\beta_2$, from Figure \ref{fig12}, a larger value can indeed accelerate frequency recovery, but after exceeding a certain value, there is no significant change in the convergence rate. Also, it is necessary to select the proper $\alpha$ in order to obtain the high quality system performance and ideal consensus speed. The above phenomena all verify the correctness of the discussion in Remark \ref{rem3} and \ref{rem4}.
\begin{figure}
	\centering
	\subfigure[Effect of different $\beta_1$ and $\eta$ on $\hat P_{a,i}$ under the finite-time algorithm in Case 6.\label{fig10}] {\includegraphics[width=8.0cm]{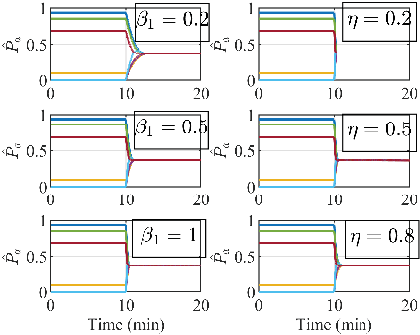}}\\
	\subfigure[Effect of different $\alpha$ on $\hat E_{a,i}$ under the finite-time algorithm in Case 6.\label{fig11}] {\includegraphics[width=8.0cm]{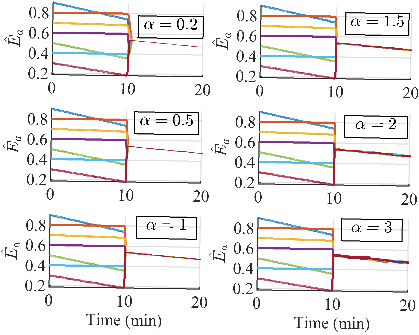}}\\
	\caption{The influence of $\beta_1$ and $\alpha$ of the finite-time algorithm on simulation results.}
	\label{figc6}
\end{figure}
\begin{figure}
	\centering
	\includegraphics[width=8cm]  {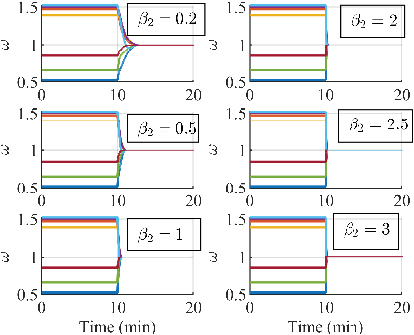} \caption{The influence of $\beta_2$ of the finite-time algorithm on simulation results.\label{fig12}}
\end{figure}
\section{Conclusions}
\par In this article, we propose a novel droop controller with a dynamic droop gain for a BESS to allocate power, and design a distributed asymptotic and finite time secondary controller for it. These efforts are made to address the issue of secondary control and SoC balance considering battery cell capacity degradation. As a result, our designed scheme promotes the SoCs of all BESSs to converge to 0 simultaneously and ensures frequency restore to the reference value through reasonable power allocation. This scheme ensures all BESSs to work in the
discharging mode before depletion, thus avoiding some BESSs working modes from being changed before the entire network is depleted of energy. We do not advocate for consistent SoCs for all BESSs, as this may cause some batteries to enter the next cycle ahead of schedule, leading to modeling errors. Two distributed schemes are designed, namely asymptotic and finite time schemes. The simulation results show that the power is sharing according to the SoC ratio, and the frequency can be restored to the reference value, while the SoCs of all BESSs simultaneously drops to 0. In addition, based on this work, new explorations can be explored in the future, such as intermittent communication and event triggering schemes.
\section{Declaration of conflicting interests}
\par The author(s) declared no potential conflicts of interest with respect to the research, authorship, and/or publication of this
article.
\section{Funding}
\par The author(s) disclosed receipt of the following financial support for the research, authorship, and/or publication of this article: This work was supported by the National Natural Science Foundation of China (Grant No. 62103203), and the General Terminal IC Interdisciplinary Science Center of Nankai University
\section{ORCID iDs}
Yalin Zhang\orcidA{} https://orcid.org/0000-0002-3788-700X\\
Zhongxin Liu\orcidB{} https://orcid.org/0000-0002-3565-4800\\
Fuyong Wang\orcidC{} https://orcid.org/0000-0002-2747-9635\\
Zengqiang Chen\orcidD{} https://orcid.org/0000-0002-1415-4073
\bibliographystyle{SageH}
\bibliography{re.bib}
\end{document}